\documentclass[journal, pdftex]{IEEEtran}

\usepackage{amsmath} 
\usepackage{amssymb} 
\usepackage{graphicx} 
\usepackage{array} 
\usepackage{booktabs} 
\usepackage{pgfplotstable} 
\usepackage{cite} 

\usepackage{pgfplots}
\pgfplotsset{compat=1.18}

\usepackage{amsthm}

\theoremstyle{plain}
\newtheorem{theorem}{Theorem}
\newtheorem{lemma}[theorem]{Lemma}
\newtheorem{assumption}{Assumption}
\newtheorem{remark}{Remark}

\usepackage[utf8]{inputenc}
\usepackage[T1]{fontenc}
\usepackage[margin=1in]{geometry}
\usepackage{bm}

\usepackage{algorithm}
\usepackage{algorithmicx}
\usepackage{algpseudocode}

\ifCLASSINFOpdf
  \graphicspath{{./}} 
  \DeclareGraphicsExtensions{.pdf,.jpeg,.png}
\else
\fi

\hyphenation{op-tical net-works semi-conduc-tor}

\begin{document}
%
\title{SDC-Based Model Predictive Control: Enhancing Computational Feasibility for Safety-Critical Quadrotor Control}
%
%
\author{Saber~Omidi,~\IEEEmembership{Student Member,~IEEE}
\thanks{S. Omidi is a Ph.D. Candidate with the Department of Mechanical Engineering, University of New Hampshire, Durham, NH 03824 USA (e-mail: saber.omidi@unh.edu).}
\thanks{Manuscript received [Date]; revised [Date].}}

\markboth{IEEE Transactions on Automatic Control,~Vol.~, No.~, Month~Year}%
{Omidi: SDC-Based MPC for Safety-Critical Quadrotor Control}

\maketitle

\begin{abstract}
Nonlinear Model Predictive Control (NMPC) is widely used for controlling high-speed robotic systems such as quadrotors. However, its significant computational demands often hinder real-time feasibility and reliability, particularly in environments requiring robust obstacle avoidance. This paper proposes a novel SDC-Based Model Predictive Control (MPC) framework, which preserves the high-precision performance of NMPC while substantially reducing computational complexity by over 30\%. By reformulating the nonlinear quadrotor dynamics through the State-Dependent Coefficient (SDC) method, the original nonlinear program problem is transformed into a sequential quadratic optimization problem. The controller integrates an integral action to eliminate steady-state tracking errors and imposes constraints for safety-critical obstacle avoidance. Additionally, a disturbance estimator is incorporated to enhance robustness against external perturbations. Simulation results demonstrate that the SDC-Based MPC achieves comparable tracking accuracy to NMPC, with greater efficiency in terms of computation times, thereby improving its suitability for real-time applications. Theoretical analysis further establishes the stability and recursive feasibility of the proposed approach.
\end{abstract}

\begin{IEEEkeywords}
Model Predictive Control (MPC); State-Dependent Coefficient (SDC); Nonlinear Control Systems; Quadrotors; Unmanned Aerial Vehicles (UAVs); Computational Efficiency; Safety-Critical Control; Robustness.
\end{IEEEkeywords}

\IEEEpeerreviewmaketitle


\section{Introduction}\label{sec:introduction}
\IEEEPARstart{U}{nmanned} aerial vehicles (UAVs), particularly quadrotors, have become increasingly important in both academic research and industry \cite{peksa2024review}. This growth is mainly due to their low cost, simple design, and ability to move with agility \cite{allan2025low}. Quadrotors can take off and land vertically and hover in place, which makes them well-suited for tasks that traditional aircraft cannot handle \cite{wang2025autonomous}. They are now used in areas like photography, surveillance, search and rescue, precision agriculture, and package delivery \cite{unde2025expanding}. As a result, improving flight control systems is essential to make the most of these vehicles in real-world situations \cite{rinaldi2023comparative}.
Effectively controlling quadrotors presents a unique challenge due to their complex, nonlinear, and underactuated dynamics. Over the years, researchers and engineers have explored a wide range of flight control strategies to address these difficulties. Techniques, such as the Proportional-Integral-Derivative (PID) controller, remain popular for basic flight stabilization because they are straightforward to implement and tune~\cite{masse2018modeling}. Nevertheless, PID controllers depend on simplified, linearized models and can struggle to maintain performance during aggressive maneuvers or when faced with unexpected disturbances. To achieve better control in such situations, optimal controllers like the Linear Quadratic Regulator (LQR) are used to systematically compute control gains that minimize a predefined cost function. However, LQR and similar methods are typically effective only under nominal, linearized conditions~\cite{chen2021feedback}.
Recognizing these shortcomings, nonlinear control approaches have gained traction for their ability to deal with the system's true dynamics. For example, backstepping uses a stepwise design to guarantee stability, while sliding mode control offers strong resilience to model uncertainties and external disturbances by guiding the system along a specific trajectory~\cite{wang2023adaptive, mu2017integral}. The aforementioned control methods are usually organized in a cascade control structure. This approach simplifies the control problem by separating it into a rapid inner loop for attitude stabilization and a slower outer loop for position tracking~\cite{bouabdallah2004pid}.
In recent years, learning-based controllers—particularly those leveraging reinforcement learning and neural networks—have shown promise by learning optimal control policies directly from real-world experience, eliminating the need for precise mathematical models~\cite{hwangbo2017control}. Despite their potential, these methods often require large datasets for training and pose challenges in ensuring safe and reliable operation.
Ensuring the safety of autonomous systems is a critical consideration in their synthesis and implementation, which has motivated the development of formally verifiable control methodologies. A prominent approach involves the use of Control Barrier Functions (CBFs), which are employed to render a defined safe set forward invariant. A CBF-based controller minimally modifies a nominal performance-oriented law; this intervention occurs only as the system state approaches the boundary of the safe set, thereby guaranteeing the system does not enter the unsafe region~\cite{ames2017control}. For stronger safety guarantees, albeit often at a higher computational cost, Reachability Analysis is utilized. This methodology entails the computation of the reachable set—the set of all states attainable by the system—and subsequently verifying that this set does not intersect with predefined unsafe regions~\cite{mitchell2005toolbox}. Another influential paradigm leverages Lyapunov-based techniques, which are conventionally applied to stability analysis. In this context, a Lyapunov-like function, referred to as a barrier certificate, is constructed such that its value defines a safe sublevel set of the state space. The control law is then synthesized to ensure the time derivative of the barrier certificate is non-positive, thereby formally confining the system's evolution to this region~\cite{prajna2007framework}.
Model Predictive Control (MPC) is an advanced control framework that leverages an explicit system model to predict future behavior over a finite time horizon. At each sampling instant, MPC solves an online optimization problem to find an optimal sequence of control inputs, after which only the first input is applied in a strategy known as a receding horizon. Its primary strength lies in its ability to explicitly incorporate system constraints, making it highly suitable for safety-critical applications. The nature of the underlying optimization is dictated by the system model and constraints; linear systems with quadratic costs result in a solvable Quadratic Program (QP), while fully nonlinear systems require solving a computationally intensive Nonlinear Program (NLP).
Building upon this core framework, several powerful variants have been developed. Nonlinear MPC (NMPC) directly uses a nonlinear model for high-fidelity prediction at the cost of significant online computation. To address this computational burden, Explicit MPC (EMPC) pre-computes the optimal control law offline as a piecewise affine function of the state, reducing the online effort to a simple function evaluation but is limited to lower-dimensional systems~\cite{bemporad2002explicit}. To handle real-world uncertainties, Robust MPC provides worst-case guarantees. A prominent implementation is Tube MPC, where a nominal trajectory is optimized and a separate feedback controller ensures the actual system state remains within a robust "tube" around this path, containing all possible deviations due to uncertainty. For probabilistic disturbances, Stochastic MPC (SMPC) optimizes expected performance while satisfying chance constraints~\cite{farina2016stochastic}. Furthermore, modern Learning-based MPC integrates machine learning to learn system dynamics from data, adapting its predictive model over time~\cite{rosolia2017learning}.
Despite its versatility, MPC's performance is fundamentally reliant on the accuracy of the predictive model, and its significant online computational demand can render it unsuitable for systems with very fast dynamics or limited onboard processing power.
This paper addresses the critical challenge of developing safe, robust, and computationally tractable control systems for quadrotor unmanned aerial vehicles (UAVs). A comprehensive simulation framework was established in MATLAB, utilizing CasADi for numerical optimization, to systematically evaluate three distinct Model Predictive Control (MPC) architectures. The controllers were benchmarked against a challenging trajectory tracking mission requiring static obstacle avoidance amidst stochastic external disturbances. The evaluated strategies include a high-fidelity Nonlinear MPC (NMPC), which solves the full nonlinear optimal control problem and serves as a performance benchmark, alongside two novel quasi-linear formulations. The primary contribution of this research is the development of a robust, computationally efficient MPC that leverages a State-Dependent Coefficient (SDC) representation of the quadrotor's nonlinear dynamics. This technique reformulates the system equations into a pseudo-linear form, which enables the nonlinear optimal control problem to be effectively approximated as a Quadratic Program (QP) at each sampling instant, significantly reducing computational complexity compared to the full NLP. Building upon this efficient foundation, we introduce a Robust SDC-based MPC (R-SDC-MPC). This advanced controller integrates a real-time disturbance observer that estimates lumped model uncertainties and external disturbances from the one-step state prediction error. This disturbance estimate is then used to actively compensate the predictive model, enhancing the controller's resilience. Through quantitative analysis of tracking accuracy (MSE), computational load (CPU time), and overall cost, our work demonstrates that the proposed R-SDC-MPC offers a compelling trade-off, achieving robustness and performance comparable to NMPC while maintaining the computational tractability of a QP-based approach.

The remainder of this paper is organized as follows. Section~\ref{sec:dynamics} presents the nonlinear dynamical model of the quadrotor and details its reformulation into the State-Dependent Coefficient (SDC) representation. Section~\ref{sec:design} develops the proposed control framework, starting with the NMPC benchmark, followed by the nominal SDC-MPC design, and culminating in the robust controller with disturbance estimation. In Section~\ref{sec:results}, a comprehensive simulation study is presented to validate the performance of the proposed methods, with a focus on tracking accuracy, computational efficiency, and safety. Finally, Section~\ref{sec:conclusion} concludes the paper with a summary of the key findings and outlines directions for future research.

\section{Quadrotor Dynamics and Problem Formulation}\label{sec:dynamics}

\subsection{Nonlinear Quadrotor Model}\label{ssec:model}
In this work, we adopt the six-degree-of-freedom (6-DoF) rigid-body model for the quadrotor, as formulated in~\cite{masse2018modeling,xu2023multi}. Figure \ref{fig:quad_model} shows the 6-DoF quadrotor model, illustrating the inertial and body-fixed frames, the primary forces of thrust and gravity, and the control torques for roll, pitch, and yaw.
\begin{figure}[htbp]
    \centering
    \includegraphics[width=\linewidth]{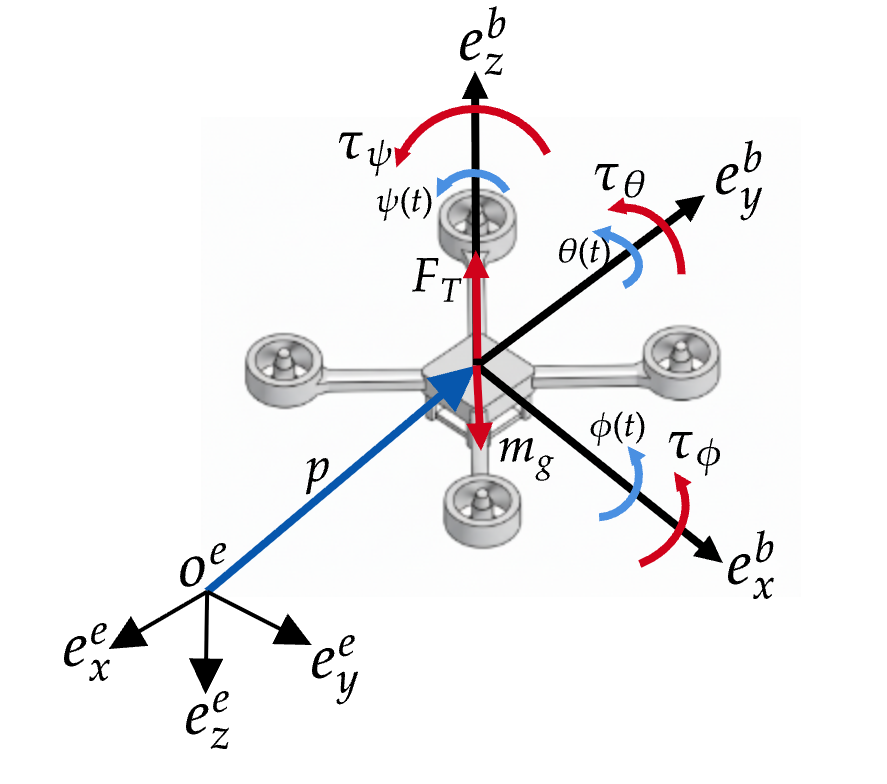}
    \caption{Schematic of the 6-DoF quadrotor model, illustrating the inertial frame $\{e\}$ and the body-fixed frame $\{b\}$. The quadrotor's position is described by the vector $p$. Control inputs consist of the total thrust force $F_T$ along the $e_z^b$ axis and the body torques $\tau_\phi$ (roll), $\tau_\theta$ (pitch), and $\tau_\psi$ (yaw), which induce the positive rotations $\phi(t)$, $\theta(t)$, and $\psi(t)$, respectively. The gravitational force $m_g$ acts in the negative $e_z^e$ direction.}
    \label{fig:quad_model}
\end{figure}
The 12-dimensional state vector $\mathbf{x} \in \mathbb{R}^{12}$ is defined as:
\begin{equation}
    \mathbf{x} = [\mathbf{p}^T, \mathbf{v}^T, \bm{\eta}^T, \bm{\omega}^T]^T
\end{equation}
where $\mathbf{p} = [x, y, z]^T$ is the position in the inertial frame, $\mathbf{v} = [v_x, v_y, v_z]^T$ is the velocity in the inertial frame, $\bm{\eta} = [\phi, \theta, \psi]^T$ represents the Euler angles (roll, pitch, yaw), and $\bm{\omega} = [p, q, r]^T$ is the angular velocity vector in the body frame. The control input vector $\mathbf{u} \in \mathbb{R}^{4}$ consists of the total thrust $T$ and the three body-axis torques $\bm{\tau} = [\tau_\phi, \tau_\theta, \tau_\psi]^T$:
\begin{equation}
    \mathbf{u} = [T, \bm{\tau}^T]^T
\end{equation}

The translational and rotational dynamics are governed by the Newton-Euler equations:
\begin{align}
    \dot{\mathbf{p}} &= \mathbf{v} \\
    \dot{\mathbf{v}} &= \mathbf{g} + \frac{1}{m} \mathbf{R}(\bm{\eta}) \mathbf{F}_T \\
    \dot{\bm{\eta}} &= \mathbf{W}(\bm{\eta}) \bm{\omega} \\
    \dot{\bm{\omega}} &= \mathbf{J}^{-1} (\bm{\tau} - \bm{\omega} \times (\mathbf{J}\bm{\omega}))
\end{align}
where $m$ is the total mass, $\mathbf{J} \in \mathbb{R}^{3 \times 3}$ is the inertia matrix, and $\mathbf{g} = [0, 0, -g]^T$ is the acceleration due to gravity. The thrust vector in the body frame is $\mathbf{F}_T = [0, 0, T]^T$. The matrix $\mathbf{R}(\bm{\eta}) \in SO(3)$ is the rotation matrix mapping vectors from the body frame to the inertial frame, and $\mathbf{W}(\bm{\eta})$ is the kinematic transformation matrix from body rates to Euler angle rates.

\subsection{SDC Representation}\label{ssec:sdc_rep}

A general continuous-time nonlinear dynamical system can be expressed as:
\begin{equation}
    \dot{\mathbf{x}}(t) = \mathbf{f}(\mathbf{x}(t), \mathbf{u}(t)), \qquad \mathbf{x}(0) = \mathbf{x}_0,
    \tag{5}
\end{equation}
where $\mathbf{x}(t) \in \mathbb{R}^n$ and $\mathbf{u}(t) \in \mathbb{R}^m$ are the system state vector and the control input vector, respectively; $\mathbf{x}_0 \in \mathbb{R}^n$ is the initial-valued vector of the states. For the purpose of control design, a possible way to express the dynamical system in (5) is as:
\begin{align}
    \dot{\mathbf{x}}(t) &= \mathbf{a}(\mathbf{x}(t)) + \mathbf{B}(\mathbf{x}(t))\mathbf{u}(t), \tag{6a} \\
    \mathbf{y}(t) &= \mathbf{c}(\mathbf{x}(t)) \tag{6b}
\end{align}
where $\mathbf{y}(t) \in \mathbb{R}^q$ is the system output vector; $\mathbf{a}(\mathbf{x}(t)) : \mathbb{R}^n \to \mathbb{R}^n$, $\mathbf{B}(\mathbf{x}(t)) : \mathbb{R}^n \to \mathbb{R}^{n \times m}$, and $\mathbf{c}(\mathbf{x}(t)) : \mathbb{R}^n \to \mathbb{R}^q$ are assumed to be smooth. The matrix-valued functions in (6) are assumed to be smooth and satisfy $\mathbf{a}(0) = \mathbf{c}(0) = 0$. Therefore, these functions can be written in their pseudo-linearized or SDC forms:
\begin{align}
    \mathbf{a}(\mathbf{x}(t)) &= \mathbf{A}(\mathbf{x}(t))\mathbf{x}(t), \tag{8a} \\
    \mathbf{c}(\mathbf{x}(t)) &= \mathbf{C}(\mathbf{x}(t))\mathbf{x}(t) \tag{8b}
\end{align}
where $\mathbf{A} : \mathbb{R}^n \to \mathbb{R}^{n \times n}$, and $\mathbf{C} : \mathbb{R}^n \to \mathbb{R}^{p \times n}$ are two matrix-valued functions. It is important to note that there are infinite ways to write pseudo-linearized non-scalar systems which allow an additional level of flexibility to select [25]. The specific factorization used in this work is chosen for its straightforward derivation from the Newton-Euler equations and its ability to isolate the control inputs in the $\mathbf{B}(\mathbf{x})$ matrix, which is advantageous for control design. Furthermore, this structure has been shown to maintain the necessary stabilizability and detectability properties for a wide range of operating conditions.
Now that the matrices are expressed in SDC form, the system can be written as follows:
\begin{align}
    \dot{\mathbf{x}}(t) &= \mathbf{A}(\mathbf{x}(t))\mathbf{x}(t) + \mathbf{B}(\mathbf{x}(t))\mathbf{u}(t), \tag{9a} \\
    \mathbf{y}(t) &= \mathbf{C}(\mathbf{x}(t))\mathbf{x}(t). \tag{9b}
\end{align}
we consider a full state-feedback control problem, where the entire state vector $\mathbf{x}(t)$ is assumed to be available to the controller at each time instant. Consequently, the system output is the state itself, rendering a separate output equation unnecessary for the control formulation. The dynamics of the quadrotor are therefore represented by the following SDC-based pseudo-linear model:
\begin{equation}
    \mathbf{B}(\mathbf{x})  = 
    \begin{bmatrix}
        \mathbf{0}_{3 \times 1} & \mathbf{0}_{3 \times 3} \\
        \frac{1}{m}\mathbf{R}(\bm{\eta})\mathbf{e}_3 & \mathbf{0}_{3 \times 3} \\
        \mathbf{0}_{3 \times 1} & \mathbf{0}_{3 \times 3} \\
        \mathbf{0}_{3 \times 1} & \mathbf{J}^{-1}
    \end{bmatrix}
\end{equation}
where $\mathbf{e}_3 = [0, 0, 1]^T$. The system matrix $\mathbf{A}(\mathbf{x}) \in \mathbb{R}^{12 \times 12}$ is:
\begingroup 
\setlength{\arraycolsep}{3pt} 
\begin{equation}
    \mathbf{A}(\mathbf{x}) = 
    \begin{bmatrix}
        \mathbf{0}_{3 \times 3} & \mathbf{I}_{3 \times 3} & \mathbf{0}_{3 \times 3} & \mathbf{0}_{3 \times 3} \\
        \mathbf{0}_{3 \times 3} & \mathbf{0}_{3 \times 3} & \mathbf{A}_{23}(\bm{\eta}, T) & \mathbf{0}_{3 \times 3} \\
        \mathbf{0}_{3 \times 3} & \mathbf{0}_{3 \times 3} & \mathbf{A}_{33}(\bm{\eta}, \bm{\omega}) & \mathbf{W}(\bm{\eta}) \\
        \mathbf{0}_{3 \times 3} & \mathbf{0}_{3 \times 3} & \mathbf{0}_{3 \times 3} & \mathbf{A}_{44}(\bm{\omega})
    \end{bmatrix}
\end{equation}
\endgroup
where the non-zero, state-dependent blocks are defined as:
\begin{align}
    \mathbf{A}_{23}(\bm{\eta}, T) &= \frac{T}{m}\frac{\partial \mathbf{R}(\bm{\eta})}{\partial \bm{\eta}}, \\
    \mathbf{A}_{33}(\bm{\eta}, \bm{\omega}) &= \frac{\partial (\mathbf{W}(\bm{\eta})\bm{\omega})}{\partial \bm{\eta}}, \\
    \mathbf{A}_{44}(\bm{\omega}) &= \mathbf{J}^{-1}(\mathbf{S}(\mathbf{J}\bm{\omega}) - \mathbf{S}(\bm{\omega})\mathbf{J}).
\end{align}
Here $\mathbf{I}_{3 \times 3}$ is the identity matrix and $\mathbf{0}_{3 \times 3}$ is zero matrix. $\frac{\partial \mathbf{R}(\bm{\eta})}{\partial \bm{\eta}}$ is the partial derivative of the rotation matrix with respect to the Euler angles. $\frac{\partial (\mathbf{W}(\bm{\eta})\bm{\omega})}{\partial \bm{\eta}}$ is the partial derivative of the kinematic transformation. $\mathbf{S}(\cdot)$ is the skew-symmetric matrix operator. For a comprehensive review of related quasi-linear control methodologies, including the closely allied State-Dependent Riccati Equation (SDRE) framework, the reader is referred to the works in~\cite{cimen2008sdre,cloutier1997sdre,mracek1998control,hammett2016sdc}.

\subsection{SDC-Based Optimal Control Problem Formulation}\label{ssec:sdc_ocp}

The main idea of the proposed SDC-based MPC framework is an online optimization problem solved at each sampling instant \(t\). This problem aims to compute an optimal control input sequence over a finite prediction horizon, \(T_p\), that steers the quadrotor along a time-varying reference trajectory, \(\mathbf{x}_{\text{ref}}(\tau)\). Given the current state \(\mathbf{x}(t)\), the finite-horizon optimal control problem is formulated as follows:
\begin{subequations} \label{eq:trajectory_optimization}
\begin{align}
    \min_{\mathbf{u}(\cdot)} J(.) &\!=\! \int_{t}^{t+T_p} \!\left(\! \|\mathbf{x}(\tau) \!-\! \mathbf{x}_{\text{ref}}(\tau)\|^2_{\mathbf{Q}} \!+\! \|\mathbf{u}(\tau)\|^2_{\mathbf{R}}\right)\! d\tau \nonumber \\
    &+ \|\mathbf{x}(t+T_p) - \mathbf{x}_{\text{ref}}(t+T_p)\|^2_{\mathbf{P}(\mathbf{x})} \label{eq:cost_function} \\
    \text{s.t.}\quad \dot{\mathbf{x}}(\tau) &=\mathbf{A}(\mathbf{x}(\tau))\mathbf{x}(\tau) + \mathbf{B}(\mathbf{x}(\tau))\mathbf{u}(\tau), \label{eq:sdc_dynamics} \\
    \mathbf{x}(\tau) &\in \mathbb{X}, \label{eq:state_constraint} \\
    \mathbf{u}(\tau) &\in \mathbb{U}, \label{eq:input_constraint} \\
    \mathbf{x}(t+T_p) &\in \mathbb{X}_T(t), \label{eq:terminal_constraint}
    \quad \forall \tau \in [t, t+T_p]
\end{align}
\end{subequations}
where the optimization is performed over the control input sequence \(\mathbf{u}(\cdot)\) for all \(\tau \in [t, t+T_p]\). The objective function in \eqref{eq:cost_function} is composed of a running cost and a terminal cost. The integral term represents the running cost, which penalizes the deviation of the predicted state trajectory \(\mathbf{x}(\tau)\) from the reference \(\mathbf{x}_{\text{ref}}(\tau)\) as well as the magnitude of the control input \(\mathbf{u}(\tau)\). The semi-positive-definite and positive-definite weighting matrices \(\mathbf{Q}\in \mathbb{R}^{n \times n}\) and \(\mathbf{R} \in \mathbb{R}^{m \times m}\), respecetivly, allow for tuning the trade-off between tracking accuracy and control effort. The terminal cost, weighted by the matrix \(\mathbf{P}(\mathbf{x})\), penalizes the final state error and is crucial for proving closed-loop stability.
The optimization is subject to several constraints that define the system's behavior and limitations.
\begin{itemize}
    \item The differential equation \eqref{eq:sdc_dynamics} enforces the system's SDC dynamics throughout the prediction horizon.
    \item Constraints \eqref{eq:state_constraint} and \eqref{eq:input_constraint} confine the state and control inputs to their respective admissible sets. The input constraint set, \(\mathbb{U}\), directly represents the physical limitations of the actuators (e.g., maximum thrust and torques). The state constraint set, \(\mathbb{X}\), is paramount for defining the safe operating envelope. For safety-critical tasks such as obstacle avoidance, this set is explicitly structured to exclude unsafe regions of the state-space. By imposing such constraints, the MPC controller is forced to find a control sequence that guarantees the predicted trajectory remains entirely within the safe region.
    \item Finally, the terminal inequality constraint\eqref{eq:terminal_constraint} compels the predicted state to lie within a terminal set \(\mathbb{X}_T(t)\) at the end of the horizon. This set is typically defined relative to the reference state \(\mathbf{x}_{\text{ref}}(t+T_p)\) and is essential for guaranteeing the recursive feasibility and stability of the closed-loop system.
\end{itemize}
Finally, a primary objective of this work is to improve the computational efficiency of the online optimization, ensuring the controller is feasible for real-time implementation without sacrificing performance or violating system constraints.

\section{Design of the SDC-Based MPC Framework}\label{sec:design}

\subsection{NMPC Benchmark}\label{ssec:nmpc}
The NMPC benchmark is based on the quasi-infinite horizon scheme proposed in \cite{chen1998quasi}. At each sampling instant \(t\), the controller solves the following finite-horizon optimal control problem for the current state \(x(t)\):
\begin{subequations} \label{eq:nmpc_benchmark}
\begin{align}
    \min_{\mathbf{u}(\cdot)} J(.) &\!=\! \int_{t}^{t+T_p} \!\left(\! \|\mathbf{x}(\tau) \!-\! \mathbf{x}_{\text{ref}}(\tau)\|^2_{\mathbf{Q}} \!+\! \|\mathbf{u}(\tau)\|^2_{\mathbf{R}}\right)\! d\tau \nonumber \\
    &+ \|\mathbf{x}(t+T_p) - \mathbf{x}_{\text{ref}}(t+T_p)\|^2_{\mathbf{Q}_t} \label{eq:cost_function_n} \\
    \text{s.t.}\quad \dot{\mathbf{x}}(\tau) &= \mathbf{f}(\mathbf{x}(\tau), \mathbf{u}(\tau)),
    \label{eq:n_dynamics} \\
    \mathbf{x}(\tau) &\in \mathbb{X}, \label{eq:state_constraint_n} \\
    \mathbf{u}(\tau) &\in \mathbb{U}, \label{eq:input_constraint_n} \\
    \mathbf{x}(t+T_p) &\in \Omega, \label{eq:terminal_constraint_n}
    \quad \forall \tau \in [t, t+T_p]
\end{align}
\end{subequations}
where the terminal penalty matrix \(\mathbf{Q}_t \in \mathbb{R}^{n \times n}\) is a positive-definite, symmetric matrix chosen to guarantee stability \cite{chen1998quasi}. It is computed off-line as the unique solution to the Lyapunov equation:
\begin{align}
        (\mathbf{A}_K + \kappa \mathbf{I})^T \mathbf{S} \!+\! \mathbf{S} (\mathbf{A}_K + \kappa \mathbf{I})\!=\!-(\mathbf{Q} + \mathbf{K}^T \mathbf{R} \mathbf{K})    
\end{align}
where \(\mathbf{A}_K = \mathbf{A}+\mathbf{BK}\) is the closed-loop matrix of the system's Jacobian linearization, and \(\kappa \geq 0\) is a tuning parameter \cite{chen1998quasi}. \(\mathbf{K}\) is a locally stabilizing state feedback gain determined linearized system, typically by solving the LQR problem \cite{mare2007solution}. The terminal region \(\Omega\) is an invariant set for the nonlinear system under a local linear state feedback law \(\mathbf{u}=\mathbf{K}\mathbf{x}(t)\) \cite{chen1998quasi}. It is defined as a level set of the Lyapunov function, \(\Omega := \{ \mathbf{x} \in \mathbb{R}^n \mid \mathbf{x}^T \mathbf{S} \mathbf{x} \leq \alpha \}\), where \(\alpha > 0\) is chosen to ensure that for any state within \(\Omega\), the corresponding linear control input satisfies the constraint \(\mathbf{K}\mathbf{x}(t) \in \mathbb{U}\) \cite{chen1998quasi}.
The explicit use of a general nonlinear dynamics model, fundamentally casts the optimal control problem as a non-convex NLP. Solving such an NLP is computationally demanding and challenging when the predicted trajectory interacts with the boundaries of the constraint sets. This is particularly for safety-critical applications where complex state constraints, \(\mathbf{x}(\tau) \in \Omega\), are activated to enforce collision avoidance by proscribing certain regions of the state-space. The computational burden can render real-time implementation infeasible. 
Moreover, the reliance on a local linearization to prove stability introduces significant conservatism for systems with prominent nonlinearities. The terminal region where the linear approximation is valid can be impractically small \cite{chen1998quasi}. This necessitates a longer prediction horizon, \(T_p\), to ensure the optimization problem remains feasible, which in turn increases the computational burden and can severely restrict the controller's effective region of attraction.

\subsection{Nominal SDC-Based MPC Formulation}\label{ssec:nominal_sdc}

To address the computational challenges inherent to the NMPC benchmark, the nominal SDC-MPC scheme is proposed. The primary advantage of this approach is its ability to reformulate the computationally intensive NLP into a sequence of QP with utilization of the SDC representation into the dynamic constraint. By parameterizing the system matrices with the most recent state measurement, the dynamic constraint for the optimal control problem is rendered as a affine system. This model is unique to each time step, as the matrices \(\mathbf{A}(\mathbf{x})\) and \(\mathbf{B}(\mathbf{x})\) are re-calculated based on the new state measurement. This transformation of the dynamics into a set of linear equality constraints is critical. When combined with a quadratic objective function and convex state and input constraints, it ensures that the overall optimal control problem becomes a convex QP. This QP can be solved efficiently and reliably, avoiding the complexities and computational burden of the original NLP.
Another key consequence of SDC formulation is that the terminal penalty matrix, \(\mathbf{P}(\mathbf{x})\), also becomes state-dependent. Consequently, a unique terminal penalty matrix, \(\mathbf{P}(\mathbf{x})\), must be computed at each time step based on the current state measurement to ensure stability of the system. With considering this, at each state \(\mathbf{x}(t)\), the State-Dependent Lyapunov Equation (SDLE) is formulated to find the positive-definite matrix \(\mathbf{P}(\mathbf{x})\) that satisfies:
\begin{align}
    \big(\mathbf{A}&(\mathbf{x}) - \mathbf{B}(\mathbf{x})\mathbf{K}(\mathbf{x})\big)^T \mathbf{P}(\mathbf{x}) \nonumber \\
     + \mathbf{P}(\mathbf{x}) \big(\mathbf{A}&(\mathbf{x}) - \mathbf{B}(\mathbf{x})\mathbf{K}(\mathbf{x})\big) \nonumber \\ +  \big(\mathbf{Q}&(\mathbf{x}) +  \mathbf{K}(\mathbf{x})^T \mathbf{R} \mathbf{K}(\mathbf{x})\big) = \mathbf{0}
\label{eq:state_dependent_lyapunov}
\end{align}
where the existence of a positive-definite solution \(\mathbf{P}(\mathbf{x})\) for all \(\mathbf{x}\) in a given domain is a sufficient condition for local stability. The existence of a unique, positive-definite solution \(\mathbf{P}(\mathbf{x})\) to SDLE is contingent upon the stability of the closed-loop system, which is directly determined by the feedback gain matrix \(\mathbf{K}(\mathbf{x})\). This gain is the sub-optimal state feedback controller, computed as 
\begin{align}
\mathbf{K}(\mathbf{x}) = \mathbf{R}^{-1} \mathbf{B}(\mathbf{x})^T \mathbf{S}(\mathbf{x}),    
\end{align}
 and is designed to stabilize the system at the state \(\mathbf{x}\). The gain itself is derived from the solution \(\mathbf{S}(\mathbf{x})\) of the State-Dependent Riccati Equation (SDRE), given by:
\begin{align}
    &\mathbf{A}(\mathbf{x})^T \mathbf{S}(\mathbf{x}) + \mathbf{S}(\mathbf{x}) \mathbf{A}(\mathbf{x})\nonumber \\ - &\mathbf{S}(\mathbf{x}) \mathbf{B}(\mathbf{x}) \mathbf{R}^{-1} \mathbf{B}(\mathbf{x})^T \mathbf{S}(\mathbf{x}) + \mathbf{Q}(\mathbf{x}) = \mathbf{0}
    \label{eq:continuous_sdre_full}
\end{align}
Therefore, the existence of a positive-definite solution to the Lyapunov equation depends on finding a stabilizing solution to the Riccati equation. While solving the SDRE and SDLE at each time step introduces a computational overhead, this process is typically very fast for systems of this dimension and its cost is included in the total per-iteration CPU times reported in Section~\ref{sec:results}. For the quadrotor model, this overhead was found to be a small fraction of the total optimization time.
Under the following theorem, we can guarantee that the SDRE solution is feasible \cite{cimen2008sdre}.

\begin{theorem}[Existence of a Stabilizing SDRE Solution]
For a nonlinear system in SDC form, a unique, positive-definite stabilizing solution \(\mathbf{S}(\mathbf{x})\) to the State-Dependent Riccati Equation \eqref{eq:continuous_sdre_full} exists at a state \(\mathbf{x}\) if the following state-dependent pairs are pointwise stabilizable and detectable, respectively \cite{cimen2008sdre,batmani2016nonlinear}:
\begin{enumerate}
    \item The pair \((\mathbf{A}(\mathbf{x}), \mathbf{B}(\mathbf{x}))\) is \textbf{stabilizable}.
    \item The pair \((\mathbf{A}(\mathbf{x}), \mathbf{Q}(\mathbf{x})^{1/2})\) is \textbf{detectable}.
\end{enumerate}
\end{theorem}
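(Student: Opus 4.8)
The plan is to reduce the State-Dependent Riccati Equation \eqref{eq:continuous_sdre_full} to a classical continuous-time Algebraic Riccati Equation (ARE) by ``freezing'' the state argument, and then to invoke the standard existence, uniqueness, and stabilizing-solution theory for the ARE \emph{pointwise} in $\mathbf{x}$. The claim is really a pointwise statement, so no continuity or smoothness of $\mathbf{x}\mapsto\mathbf{S}(\mathbf{x})$ is needed here; only the linear-quadratic theory applied at the fixed state matters.

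First I would fix an arbitrary admissible state $\mathbf{x}$ and treat $\mathbf{A} := \mathbf{A}(\mathbf{x})$, $\mathbf{B} := \mathbf{B}(\mathbf{x})$, and $\mathbf{Q} := \mathbf{Q}(\mathbf{x})$ as \emph{constant} matrices, with $\mathbf{R}\succ 0$ constant by hypothesis. Under this freezing, \eqref{eq:continuous_sdre_full} is exactly the ARE $\mathbf{A}^T\mathbf{S} + \mathbf{S}\mathbf{A} - \mathbf{S}\mathbf{B}\mathbf{R}^{-1}\mathbf{B}^T\mathbf{S} + \mathbf{Q} = \mathbf{0}$, and the two assumptions become: $(\mathbf{A},\mathbf{B})$ stabilizable and $(\mathbf{A},\mathbf{Q}^{1/2})$ detectable. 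I would then apply the classical ARE theorem: under these conditions the Hamiltonian matrix
\[
\mathbf{H} \;=\; \begin{bmatrix} \mathbf{A} & -\mathbf{B}\mathbf{R}^{-1}\mathbf{B}^T \\ -\mathbf{Q} & -\mathbf{A}^T \end{bmatrix}
\]
has no eigenvalues on the imaginary axis and, by its Hamiltonian spectral symmetry, has exactly $n$ eigenvalues (with multiplicity) in the open left half-plane. Its $n$-dimensional stable invariant subspace admits a basis $\begin{bmatrix}\mathbf{X}\\\mathbf{Y}\end{bmatrix}$ with $\mathbf{X}$ nonsingular, and $\mathbf{S} := \mathbf{Y}\mathbf{X}^{-1}$ is the unique symmetric solution of the ARE for which $\mathbf{A} - \mathbf{B}\mathbf{R}^{-1}\mathbf{B}^T\mathbf{S}$ is Hurwitz; moreover $\mathbf{S}\succeq 0$. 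Restoring the state dependence, this is precisely $\mathbf{S}(\mathbf{x})$, the gain $\mathbf{K}(\mathbf{x}) = \mathbf{R}^{-1}\mathbf{B}(\mathbf{x})^T\mathbf{S}(\mathbf{x})$ is well defined, and $\mathbf{A}(\mathbf{x}) - \mathbf{B}(\mathbf{x})\mathbf{K}(\mathbf{x})$ is Hurwitz — which is the pointwise stabilizing property and which in turn guarantees the unique positive-definite solution of the SDLE \eqref{eq:state_dependent_lyapunov}.

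To upgrade $\mathbf{S}(\mathbf{x})\succeq 0$ to the strictly positive-definite conclusion stated in the theorem, I would use the standard sharpening: detectability alone yields only positive \emph{semi}definiteness, so one either strengthens Assumption 2 to observability of $(\mathbf{A}(\mathbf{x}),\mathbf{Q}(\mathbf{x})^{1/2})$ or, as is customary in SDRE design, takes $\mathbf{Q}(\mathbf{x})\succ 0$. Under either, I would argue by contradiction: if $\mathbf{S}(\mathbf{x})\mathbf{z}=0$ for some $\mathbf{z}\neq 0$, then pre- and post-multiplying the ARE by $\mathbf{z}^T$ and $\mathbf{z}$ forces $\mathbf{Q}(\mathbf{x})^{1/2}\mathbf{z}=0$ and forces $\mathbf{z}$ into an $\mathbf{A}(\mathbf{x})$-invariant unobservable subspace, contradicting observability (or directly contradicting $\mathbf{Q}(\mathbf{x})\succ 0$). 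Since $\mathbf{x}$ was arbitrary in the region where the two pointwise conditions hold, the conclusion holds at every such state.

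The hard part is not the mathematics — the ARE existence/uniqueness machinery is entirely off the shelf — but rather pinning down the right level of generality. Two points would need care: (i) the theorem asserts only \emph{pointwise} existence at a given $\mathbf{x}$, and establishing continuity or smoothness of $\mathbf{x}\mapsto\mathbf{S}(\mathbf{x})$ (which the later closed-loop stability arguments implicitly lean on) would additionally require the stabilizability/detectability pair to vary continuously and the stable eigenspace of $\mathbf{H}(\mathbf{x})$ to retain constant dimension over the operating domain; and (ii) the mild mismatch between ``positive-definite'' in the conclusion and ``detectable'' in Assumption 2, which I would resolve explicitly via the strengthening to observability or $\mathbf{Q}(\mathbf{x})\succ 0$ described above.
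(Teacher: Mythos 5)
Your argument is correct and is essentially the proof the paper implicitly relies on: the paper offers no proof of this theorem, deferring entirely to the cited SDRE literature, and that literature establishes the result exactly as you do --- by freezing $\mathbf{x}$, recognizing the SDRE as a pointwise classical ARE, and invoking the standard Hamiltonian/stable-invariant-subspace existence and uniqueness theory under stabilizability and detectability. Your observation that detectability alone yields only $\mathbf{S}(\mathbf{x})\succeq 0$, so that the theorem's claim of a \emph{positive-definite} solution strictly requires observability of $(\mathbf{A}(\mathbf{x}),\mathbf{Q}(\mathbf{x})^{1/2})$ or $\mathbf{Q}(\mathbf{x})\succ 0$, is a genuine and worthwhile correction to the statement as written.
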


The fundamental argument for establishing recursive feasibility is conceptually consistent for SDC based MPC and linear MPC schemes. This argument relies on constructing a feasible candidate solution for the current time step using the optimal solution from the preceding step. Therefore, the proof of recursive feasibility is contingent upon a key assumption regarding the system's local dynamics and a constructive lemma that guarantees the existence of the necessary terminal ingredients. Nevertheless, the theoretical underpinnings of the proof and its practical implications diverge significantly due to the state-dependent, nonlinear nature of the SDC framework.

\begin{assumption}[Stabilizability of the Linearization] \label{as:stabilizability}
The Jacobian linearization of the system at the origin, represented by the pair of matrices \((\mathbf{A}, \mathbf{B})\), is stabilizable \cite{chen1998quasi}.
\end{assumption}

This assumption is fundamental, as it ensures that the system can be locally stabilized, which is a prerequisite for constructing a valid terminal set for the MPC. This leads to the following critical lemma.

\begin{lemma}[Existence of Terminal Ingredients] \label{lem:terminal}
If Assumption \ref{as:stabilizability} holds, then a stabilizing linear feedback gain \(\mathbf{K}\), a positive-definite terminal penalty matrix \(\mathbf{P}\), and a terminal region \(\mathbb{X}_T\) can be constructed. This terminal region \(\mathbb{X}_T\) is a \textbf{positively invariant set} for the full nonlinear system under the local linear feedback law \(\mathbf{u} = \mathbf{Kx}\) \cite{chen1998quasi, rawlings2017model}.
\end{lemma}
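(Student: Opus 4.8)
The plan is to reproduce, in the present notation, the quasi-infinite-horizon terminal-ingredient construction of Chen and Allgöwer, built entirely on the Jacobian linearization $(\mathbf{A},\mathbf{B})$ at the origin (so that it coincides with $(\mathbf{A}(0),\mathbf{B}(0))$ in the SDC parametrization). First, using Assumption~\ref{as:stabilizability}, I would pick a feedback gain $\mathbf{K}$ --- for instance the infinite-horizon LQR gain for $(\mathbf{A},\mathbf{B},\mathbf{Q},\mathbf{R})$ --- such that the closed-loop matrix $\mathbf{A}_K:=\mathbf{A}+\mathbf{B}\mathbf{K}$ is Hurwitz, and in fact so that $\mathbf{A}_K+\kappa\mathbf{I}$ remains Hurwitz for some $\kappa>0$ (any $\kappa$ below the stability margin of $\mathbf{A}_K$ works); this is the same $\kappa$ that appears in the terminal Lyapunov equation of Section~\ref{ssec:nmpc}.

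Second, since $\mathbf{A}_K+\kappa\mathbf{I}$ is Hurwitz and $\mathbf{Q}+\mathbf{K}^{T}\mathbf{R}\mathbf{K}\succeq 0$ (positive definite, hence forcing $\mathbf{P}\succ 0$, under the detectability condition of Theorem~1 evaluated at the origin), the Lyapunov equation $(\mathbf{A}_K+\kappa\mathbf{I})^{T}\mathbf{P}+\mathbf{P}(\mathbf{A}_K+\kappa\mathbf{I})=-(\mathbf{Q}+\mathbf{K}^{T}\mathbf{R}\mathbf{K})$ admits a unique positive-definite solution $\mathbf{P}$, which I take as the terminal penalty matrix. I then set $V(\mathbf{x}):=\mathbf{x}^{T}\mathbf{P}\mathbf{x}$ and define the candidate terminal region as the sublevel set $\mathbb{X}_T:=\{\mathbf{x}:\mathbf{x}^{T}\mathbf{P}\mathbf{x}\le\alpha\}$, with the level $\alpha>0$ fixed at the end.

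Third --- the crux --- I would show $\mathbb{X}_T$ is positively invariant for the full nonlinear closed loop $\dot{\mathbf{x}}=\mathbf{f}(\mathbf{x},\mathbf{K}\mathbf{x})$. Writing $\mathbf{f}(\mathbf{x},\mathbf{K}\mathbf{x})=\mathbf{A}_K\mathbf{x}+\bm{\phi}(\mathbf{x})$ with $\bm{\phi}(0)=0$ and $\partial\bm{\phi}/\partial\mathbf{x}(0)=0$ (so $\|\bm{\phi}(\mathbf{x})\|=o(\|\mathbf{x}\|)$ by smoothness of $\mathbf{f}$), and substituting the Lyapunov equation, one obtains $\dot{V}=-\mathbf{x}^{T}(\mathbf{Q}+\mathbf{K}^{T}\mathbf{R}\mathbf{K})\mathbf{x}-2\kappa\,\mathbf{x}^{T}\mathbf{P}\mathbf{x}+2\mathbf{x}^{T}\mathbf{P}\bm{\phi}(\mathbf{x})$. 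Choosing a neighborhood radius $\delta>0$ such that $\|\bm{\phi}(\mathbf{x})\|\le\bigl(\kappa\,\lambda_{\min}(\mathbf{P})/\|\mathbf{P}\|\bigr)\|\mathbf{x}\|$ whenever $\|\mathbf{x}\|\le\delta$ bounds the last term by $2\kappa\,\mathbf{x}^{T}\mathbf{P}\mathbf{x}$, leaving $\dot{V}\le-\mathbf{x}^{T}(\mathbf{Q}+\mathbf{K}^{T}\mathbf{R}\mathbf{K})\mathbf{x}\le 0$ on that neighborhood. I then fix $\alpha$ small enough that $\mathbb{X}_T\subseteq\{\|\mathbf{x}\|\le\delta\}$ (e.g. $\alpha\le\lambda_{\min}(\mathbf{P})\delta^{2}$), and additionally small enough that $\mathbb{X}_T\subseteq\mathbb{X}$ and $\mathbf{K}\mathbf{x}\in\mathbb{U}$ for every $\mathbf{x}\in\mathbb{X}_T$ (possible since $\mathbb{X}$ and $\mathbb{U}$ contain neighborhoods of the origin and $\|\mathbf{K}\mathbf{x}\|$ is bounded on $\mathbb{X}_T$). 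On the boundary $\{V=\alpha\}$ we then have $\dot{V}\le 0$, so $V$ is non-increasing along any trajectory starting in $\mathbb{X}_T$ and the trajectory never leaves; this is the claimed positive invariance, and it delivers the triple $(\mathbf{K},\mathbf{P},\mathbb{X}_T)$.

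I expect the third step to be the main obstacle: quantifying the nonlinear remainder $\bm{\phi}$ uniformly over a sublevel set and showing its contribution to $\dot{V}$ is absorbed by the $2\kappa\,\mathbf{x}^{T}\mathbf{P}\mathbf{x}$ margin engineered into $\mathbf{P}$ is precisely what forces $\alpha$ to be small, and it is the source of the conservatism noted at the end of Section~\ref{ssec:nmpc}. A minor subtlety is that with only $\mathbf{Q}\succeq 0$ the decrease of $V$ need not be strict; positive invariance requires only $\dot{V}\le 0$, so the lemma goes through as stated, whereas the sharper conclusion of asymptotic stability on $\mathbb{X}_T$ would additionally invoke detectability of $(\mathbf{A},\mathbf{Q}^{1/2})$ via a LaSalle-type argument.
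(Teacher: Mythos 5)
Your construction is exactly the quasi-infinite-horizon recipe of Chen and Allg\"ower that the paper itself invokes for this lemma: the paper offers no proof of its own, only the citation to \cite{chen1998quasi}, but the ingredients it describes in Section~\ref{ssec:nmpc} (the shifted Lyapunov equation for $\mathbf{P}$ with margin $\kappa$, the sublevel set $\{\mathbf{x}^{T}\mathbf{P}\mathbf{x}\le\alpha\}$, and the shrinking of $\alpha$ to absorb the nonlinear remainder and enforce $\mathbf{K}\mathbf{x}\in\mathbb{U}$) are precisely the ones you reconstruct and verify. Your argument is correct, and you rightly flag the one point where the paper is looser than the reference --- with $\mathbf{Q}$ only positive semidefinite, strict positive definiteness of $\mathbf{P}$ needs the additional observability/detectability hypothesis you note, whereas positive invariance itself only needs $\dot{V}\le 0$.
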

With this foundation, we can formally state and prove the recursive feasibility of the controller.
\begin{theorem}[Recursive Feasibility]
If Assumption \ref{as:stabilizability} holds and the optimization problem is feasible at the initial time \(t=0\), then a feasible solution is guaranteed to exist for all subsequent time steps \(t > 0\).
\end{theorem}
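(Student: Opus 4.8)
The plan is to follow the standard shifted-horizon construction of model predictive control, adapted to the state-dependent setting. Suppose at time $t$ the optimal control problem \eqref{eq:trajectory_optimization} admits an optimal solution $\mathbf{u}^*(\cdot\,;t)$ with associated predicted trajectory $\mathbf{x}^*(\cdot\,;t)$ on $[t, t+T_p]$, terminating in the terminal set $\mathbb{X}_T(t)$. Applying $\mathbf{u}^*(t;t)$ over one sampling interval $\delta$, the system evolves to a new state $\mathbf{x}(t+\delta)$. First I would construct, at time $t+\delta$, the \emph{candidate} input obtained by concatenating the tail of the previous optimal input, $\mathbf{u}^*(\tau;t)$ for $\tau \in [t+\delta, t+T_p]$, with the local linear feedback law $\mathbf{u} = \mathbf{K}\mathbf{x}$ of Lemma~\ref{lem:terminal} on the newly exposed interval $[t+T_p, t+T_p+\delta]$. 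The goal is to show this candidate is feasible for the problem at $t+\delta$, which immediately yields recursive feasibility by induction, the base case being the assumed feasibility at $t=0$.

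The verification breaks into three parts. First, the candidate must satisfy the SDC dynamics \eqref{eq:sdc_dynamics}: on the shared interval $[t+\delta, t+T_p]$ this holds because the previous optimal trajectory already did (the dynamics are reconstructed from the closed-loop state as in a standard warm-start argument), and on $[t+T_p, t+T_p+\delta]$ it holds by construction since we propagate the model under $\mathbf{u}=\mathbf{K}\mathbf{x}$. Second, the state and input constraints \eqref{eq:state_constraint}--\eqref{eq:input_constraint}: on the shared interval these are inherited from the previous solution, and on the tail interval they follow from positive invariance of $\mathbb{X}_T(t)$ under $\mathbf{u}=\mathbf{K}\mathbf{x}$ together with the level-set design of the terminal region that enforces $\mathbf{K}\mathbf{x}\in\mathbb{U}$ (exactly the $\alpha$-scaling argument recalled for $\Omega$ in Section~\ref{ssec:nmpc}, here transplanted to $\mathbb{X}_T$). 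Third, the new terminal constraint $\mathbf{x}(t+T_p+\delta)\in\mathbb{X}_T(t+\delta)$: since the candidate enters $\mathbb{X}_T(t)\ni\mathbf{x}^*(t+T_p;t)$ and stays there under the invariant feedback, and since the reference shifts continuously so that $\mathbb{X}_T(t+\delta)$ contains the image of $\mathbb{X}_T(t)$ after one step of the local closed loop, the terminal inclusion is preserved.

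The hard part will be the second and third items in the \emph{state-dependent} setting, because the SDC factorization $\mathbf{A}(\mathbf{x}),\mathbf{B}(\mathbf{x})$ and the terminal matrix $\mathbf{P}(\mathbf{x})$ change with the state, whereas Lemma~\ref{lem:terminal} only guarantees a terminal set that is invariant under the \emph{fixed} linearization-based feedback $\mathbf{u}=\mathbf{Kx}$ at the origin. The reconciliation is that the terminal ingredients used for the feasibility certificate are precisely the fixed ones from Lemma~\ref{lem:terminal} — the state-dependent $\mathbf{P}(\mathbf{x})$, $\mathbf{K}(\mathbf{x})$ from the SDRE/SDLE enter only as an improved (less conservative) running choice, not as the object on which invariance must be proved. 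I would therefore emphasize that $\mathbb{X}_T$ is chosen small enough that the full nonlinear dynamics under $\mathbf{u}=\mathbf{Kx}$ renders it positively invariant (Lemma~\ref{lem:terminal}), and that the SDC model agrees with the true dynamics pointwise, so the candidate trajectory built from the SDC propagation on the tail coincides with a genuine admissible trajectory. A minor technical point to handle carefully is the continuity/compatibility of the time-varying terminal set $\mathbb{X}_T(t)$ with the reference trajectory $\mathbf{x}_{\text{ref}}(\cdot)$ across a sampling step; I would state this as a mild standing condition (e.g. the reference is itself a feasible trajectory of the local closed loop near the terminal manifold), under which the one-step shift of $\mathbb{X}_T(t)$ is contained in $\mathbb{X}_T(t+\delta)$, closing the induction.
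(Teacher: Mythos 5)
Your proposal follows essentially the same route as the paper's proof: an induction in which the candidate at $t+\delta$ is the tail of the previous optimal input concatenated with the local terminal feedback $\mathbf{u}=\mathbf{K}\mathbf{x}$ from Lemma~\ref{lem:terminal}, with feasibility of the appended segment secured by positive invariance of the terminal set. You are in fact more careful than the paper on two points it leaves implicit — that the invariance certificate rests on the \emph{fixed} linearization-based gain rather than the state-dependent $\mathbf{K}(\mathbf{x})$, and that the time-varying set $\mathbb{X}_T(t)$ needs a compatibility condition with the shifted reference — so your added standing assumption is a reasonable (and arguably necessary) supplement, not a deviation.
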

\begin{proof}
The proof is by induction. Assume a feasible optimal solution exists at time \(t\), yielding the control sequence \(\mathbf{u}^*(\cdot; t)\) and state trajectory \(\mathbf{x}^*(\cdot; t)\), which satisfies the terminal constraint \(\mathbf{x}^*(t+T_p; t) \in \mathbb{X}_T\). At the next sampling instant, \(t+\delta\), we construct a candidate solution \(\hat{\mathbf{u}}(\cdot)\) for the new horizon by using the tail of the previous solution and appending the terminal control law:
\begin{equation}
\hat{\mathbf{u}}(\tau) =
\begin{cases}
    \mathbf{u}^*(\tau; t) & \text{for } \tau \in [t+\delta, t+T_p] \\
    \mathbf{K} \hat{\mathbf{x}}(\tau) & \text{for } \tau \in (t+T_p, t+\delta+T_p]
\end{cases}
\end{equation}
where the existence of the stabilizing gain \(\mathbf{K}\) and the terminal set \(\mathbb{X}_T\) is guaranteed by Lemma \ref{lem:terminal}. This candidate is feasible because the terminal state \(\mathbf{x}^*(t+T_p; t)\) lies in \(\mathbb{X}_T\), which is a positively invariant set under the control law \(\mathbf{u}=\mathbf{Kx}\). Therefore, the new terminal state at \(t+\delta+T_p\) is also guaranteed to be in \(\mathbb{X}_T\), satisfying the terminal constraint. Since a feasible solution can be constructed, the optimization problem remains feasible for all subsequent steps \cite{chen1998quasi, rawlings1993stability}.
\end{proof}
The determination of a locally stabilizing feedback gain, derived from the Jacobian linearization of the system at its equilibrium, constitutes the foundational prerequisite for the recursive feasibility proof. This concept is illustrated conceptually in Figure~\ref{fig:recursive_feasibility}. This off-line synthesis of a stabilizing terminal controller and an associated invariant terminal set enables the construction of a feasible candidate solution at each subsequent sampling instant, thereby satisfying the conditions required to guarantee recursive feasibility.
\begin{figure}[htbp]
    \centering
\includegraphics[width=0.5\textwidth]{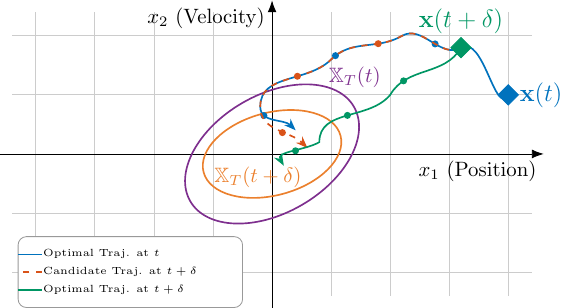}
    \caption{Conceptual schematic of recursive feasibility with a time-varying terminal set. The solid blue line shows the optimal trajectory \(\mathbf{x}^*(\cdot; t)\) computed at time \(t\), which terminates in the invariant set \(\mathbb{X}_T(t)\). At time \(t+\delta\), the dashed red line shows a feasible candidate trajectory. The solid green line represents the new optimal trajectory, \(\mathbf{x}^*(\cdot; t+\delta)\), which terminates in the updated terminal set \(\mathbb{X}_T(t+\delta)\).}
    \label{fig:recursive_feasibility}
\end{figure}

\subsection{Robust SDC-Based MPC with Disturbance Estimation}\label{ssec:robust_sdc}

To enhance the controller's resilience against inevitable model-plant mismatch and external disturbances, we extend the nominal SDC-based MPC framework with a robust formulation. This is achieved by designing a disturbance observer to actively estimate and compensate for uncertainties. The true system dynamics are modeled as being affected by a lumped disturbance term, \(\mathbf{d}(t)\), which represents the combined effects of parametric uncertainties, unmodeled dynamics, and exogenous forces:
\begin{equation}
    \dot{\mathbf{x}}(t) = \mathbf{A}(\mathbf{x})\mathbf{x}(t) + \mathbf{B}(\mathbf{x})\mathbf{u}(t) + \mathbf{E}_d \mathbf{d}(t)
    \label{eq:disturbed_dynamics}
\end{equation}
where \(\mathbf{d}(t) \in \mathbb{R}^n\) is the vector of unknown disturbance forces and torques, and \(\mathbf{E}_d\) is the disturbance input matrix. To ensure the tractability of the estimation problem, a standard assumption is made on the nature of this disturbance.
\begin{assumption}[Disturbance Characteristics] \label{as:disturbance}
The lumped disturbance \(\mathbf{d}(t)\) is assumed to be bounded and slowly time-varying, such that its rate of change is negligible over a single sampling interval, i.e., \(\dot{\mathbf{d}}(t) \approx \mathbf{0}\) for \(t \in [t_k, t_{k+1})\) \cite{chen2016disturbance}.
\end{assumption}
Under this assumption, a discrete-time state-observer is designed to estimate the disturbance in real-time. The estimation process at each sampling instant \(t_k\) involves computing the one-step state prediction error, \(\mathbf{e}_k\), which is the discrepancy between the measured state, \(\mathbf{x}_{\text{meas}}(t_k)\), and the state predicted from the previous control action using the nominal model:
\begin{equation}
    \mathbf{e}_k = \mathbf{x}_{\text{meas}}(t_k) - \mathbf{x}_{\text{pred}}(t_k)
\end{equation}
This prediction error is used to infer the disturbance that acted on the system, yielding a raw measurement, \(\hat{\mathbf{d}}_{\text{meas}}(t_k)\). To reduce sensitivity to measurement noise, this estimate is passed through a first-order low-pass filter:
\begin{equation}
    \hat{\mathbf{d}}(t_k) = \alpha \hat{\mathbf{d}}(t_{k-1}) + (1-\alpha) \hat{\mathbf{d}}_{\text{meas}}(t_k)
    \label{eq:observer}
\end{equation}
The observer gain \(\alpha \in [0, 1)\) determines the trade-off between the observer's convergence rate and its noise sensitivity. The gain was selected empirically as 0.9 to provide a balance between responsiveness to changing disturbances and rejection of high-frequency measurement noise.
\begin{lemma}[Observer Error Boundedness]
If Assumption \ref{as:disturbance} holds and the system states remain bounded, the estimation error of the observer \eqref{eq:observer}, defined as \(\tilde{\mathbf{d}}(t_k) = \mathbf{d}(t_k) - \hat{\mathbf{d}}(t_k)\), is bounded for a suitable choice of the observer gain \(\alpha\) \cite{wang2009disturbance}.
\end{lemma}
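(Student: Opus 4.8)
The plan is to establish boundedness of the estimation error $\tilde{\mathbf{d}}(t_k)$ by deriving a scalar recursion for its norm and showing that this recursion is input-to-state stable with respect to the bounded disturbance increments and measurement noise. First I would make explicit the structure of the raw disturbance measurement. Since the one-step prediction $\mathbf{x}_{\text{pred}}(t_k)$ is generated by integrating the nominal model $\dot{\mathbf{x}} = \mathbf{A}(\mathbf{x})\mathbf{x} + \mathbf{B}(\mathbf{x})\mathbf{u}$ over $[t_{k-1}, t_k)$ while the true plant evolves under \eqref{eq:disturbed_dynamics}, the prediction error $\mathbf{e}_k$ satisfies, to first order in the sampling interval $\delta$, a relation of the form $\mathbf{e}_k \approx \delta\, \mathbf{E}_d \mathbf{d}(t_{k-1}) + \bm{\nu}_k$, where $\bm{\nu}_k$ collects the measurement noise and the $\mathcal{O}(\delta^2)$ discretization remainder (both bounded by the hypothesis that the states remain bounded and Assumption~\ref{as:disturbance}). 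Inverting the (left-invertible) map $\mathbf{E}_d$ then gives $\hat{\mathbf{d}}_{\text{meas}}(t_k) = \mathbf{d}(t_{k-1}) + \bm{\mu}_k$ with $\|\bm{\mu}_k\| \le \bar{\mu}$ for some constant $\bar{\mu}$ depending on the noise bound and $\delta$.

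Next I would substitute this into the filter recursion \eqref{eq:observer}. Writing $\mathbf{d}(t_k) = \mathbf{d}(t_{k-1}) + \bm{\Delta}_k$ with $\|\bm{\Delta}_k\| = \|\int_{t_{k-1}}^{t_k}\dot{\mathbf{d}}\,d\tau\| \le \bar{\Delta}$ negligible by Assumption~\ref{as:disturbance}, I would form the error dynamics
\begin{equation}
\tilde{\mathbf{d}}(t_k) = \alpha \tilde{\mathbf{d}}(t_{k-1}) + \bm{\Delta}_k - (1-\alpha)\bm{\Delta}_k - (1-\alpha)\bm{\mu}_k,
\end{equation}
obtained by subtracting $\hat{\mathbf{d}}(t_k) = \alpha\hat{\mathbf{d}}(t_{k-1}) + (1-\alpha)(\mathbf{d}(t_{k-1}) + \bm{\mu}_k)$ from $\mathbf{d}(t_k)$ and regrouping. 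Taking norms and using the triangle inequality yields the scalar comparison inequality
\begin{equation}
\|\tilde{\mathbf{d}}(t_k)\| \le \alpha \|\tilde{\mathbf{d}}(t_{k-1})\| + \alpha\bar{\Delta} + (1-\alpha)\bar{\mu}.
\end{equation}

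Finally, since $\alpha \in [0,1)$, this linear recursion is a contraction, and iterating it from the initial error $\tilde{\mathbf{d}}(t_0)$ gives the uniform bound
\begin{equation}
\|\tilde{\mathbf{d}}(t_k)\| \le \alpha^k \|\tilde{\mathbf{d}}(t_0)\| + \frac{\alpha\bar{\Delta} + (1-\alpha)\bar{\mu}}{1-\alpha},
\end{equation}
so $\limsup_{k\to\infty}\|\tilde{\mathbf{d}}(t_k)\| \le \bar{\Delta}\,\alpha/(1-\alpha) + \bar{\mu}$, which is finite; this proves boundedness for any $\alpha \in [0,1)$, and exhibits the convergence-rate/noise-amplification trade-off in $\alpha$ explicitly. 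I expect the main obstacle to be the first step — rigorously justifying that the prediction error factors as $\delta \mathbf{E}_d \mathbf{d} + (\text{bounded remainder})$ with a remainder that is genuinely uniformly bounded. This requires the standing hypothesis that the states stay bounded (so that $\mathbf{A}(\mathbf{x})$, $\mathbf{B}(\mathbf{x})$ and their state-Jacobians are bounded on the operating region, controlling the discretization error), left-invertibility of $\mathbf{E}_d$ (or working in its range space), and a quantitative version of the "slowly varying" Assumption~\ref{as:disturbance}; I would state these as the precise conditions under which the lemma holds rather than chase the sharpest constants.
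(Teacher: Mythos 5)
Your argument is essentially correct, and note that the paper itself supplies no proof of this lemma at all---it is stated and deferred entirely to \cite{wang2009disturbance}---so your self-contained derivation is an addition rather than a variant of an existing argument. The route you take (show the low-pass filter error recursion is a contraction for \(\alpha\in[0,1)\) driven by bounded inputs, hence uniformly bounded) is the standard and correct one. Two points to tighten. First, there is a small algebraic slip in the error dynamics: subtracting \(\hat{\mathbf{d}}(t_k)=\alpha\hat{\mathbf{d}}(t_{k-1})+(1-\alpha)\bigl(\mathbf{d}(t_{k-1})+\bm{\mu}_k\bigr)\) from \(\mathbf{d}(t_k)=\mathbf{d}(t_{k-1})+\bm{\Delta}_k\) gives \(\tilde{\mathbf{d}}(t_k)=\alpha\tilde{\mathbf{d}}(t_{k-1})+\bm{\Delta}_k-(1-\alpha)\bm{\mu}_k\), without the extra \(-(1-\alpha)\bm{\Delta}_k\) term you wrote; the asymptotic bound is therefore \(\bar{\Delta}/(1-\alpha)+\bar{\mu}\) rather than \(\alpha\bar{\Delta}/(1-\alpha)+\bar{\mu}\). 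This does not affect boundedness, but it matters for the interpretation: the disturbance increment is amplified by \(1/(1-\alpha)\), which is precisely the price of a large filter gain and is why the lemma is phrased as holding ``for a suitable choice of \(\alpha\)'' rather than for every \(\alpha\in[0,1)\). Second, you are right that the real burden of proof is the claim \(\hat{\mathbf{d}}_{\text{meas}}(t_k)=\mathbf{d}(t_{k-1})+\bm{\mu}_k\) with \(\bm{\mu}_k\) uniformly bounded. The paper never specifies the map from \(\mathbf{e}_k\) to \(\hat{\mathbf{d}}_{\text{meas}}(t_k)\), nor any rank condition on \(\mathbf{E}_d\) in \eqref{eq:disturbed_dynamics}, so your proposal to elevate left-invertibility of \(\mathbf{E}_d\) (or restriction of the estimate to its column space), a uniform bound on the discretization remainder over the bounded operating region, and a quantitative version of Assumption~\ref{as:disturbance} to explicit hypotheses is exactly what a rigorous statement of the lemma requires.
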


This disturbance estimate, \(\hat{\mathbf{d}}(t_k)\), is subsequently injected into the MPC's internal prediction model as a corrective offset term. For the optimization performed at time \(t_k\), the pseudo-linear prediction dynamics are given by:
\begin{equation}
    \mathbf{x}_{i+1|k} = \mathbf{A}_k \mathbf{x}_{i|k} + \mathbf{B}_k \mathbf{u}_{i|k} + \mathbf{C}_k + \mathbf{E}_d \hat{\mathbf{d}}(t_k)
    \label{eq:prediction_model}
\end{equation}

\begin{remark}[On the SDC Prediction Model]
The prediction model \eqref{eq:prediction_model} is an affine system \textit{for the duration of a single optimization instance}. The matrices \(\mathbf{A}_k\), \(\mathbf{B}_k\), and \(\mathbf{C}_k\) are constant within this context. They are obtained by evaluating the state-dependent matrices \(\mathbf{A}(\mathbf{x})\) and \(\mathbf{B}(\mathbf{x})\) at a specific operating point, such as the current state \(\mathbf{x}(t_k)\). This approach differs fundamentally from Jacobian linearization; it is an exact algebraic restructuring that fully captures the nonlinearities at that specific point, rather than an approximation that discards higher-order terms.
\end{remark}

By incorporating the disturbance estimate, the controller actively accommodates for the disturbance over its prediction horizon. The stability of the composite observer-controller system can be analyzed by considering the separation principle, where the stability of the nominal MPC and the boundedness of the observer error together ensure the ultimate boundedness of the closed-loop system states \cite{mayne2000constrained}.
The formal verification of recursive feasibility and stability is essential for this robust control architecture. In the presence of persistent disturbances, the objective shifts from proving nominal asymptotic stability to ensuring that the optimization remains feasible despite the model-plant mismatch and that the system state is ultimately bounded within a small neighborhood of the reference trajectory. These properties are typically established by constructing a robustly invariant terminal set and employing an Input-to-State Stability (ISS) framework in the Lyapunov analysis \cite{mayne2000constrained, rawlings2017model}.
The implementation of the proposed robust SDC-based MPC is formally outlined in Algorithm \ref{alg:robust_mpc}. The algorithm proceeds iteratively at each sampling instant \(t_k\); it first updates the disturbance estimate using the measured state and the one-step prediction error from the nominal model.
\begin{algorithm}
\caption{Robust SDC-Based MPC Algorithm}
\label{alg:robust_mpc}
\begin{algorithmic}[1]
\State \textbf{Initialization:} Set initial state \(\mathbf{x}_0\), disturbance estimate \(\hat{\mathbf{d}}_0 = \mathbf{0}\), time step \(k=0\).
\Repeat
    \State Measure the current state of the system, \(\mathbf{x}_{\text{meas}}(t_k)\).
    \If{$k > 0$}
        \State Predict the state using the nominal model from the previous step:
        \Statex \qquad \(\mathbf{x}_{\text{pred}}(t_k) = F_{\text{nominal}}(\mathbf{x}_{\text{meas}}(t_{k-1}), \mathbf{u}(t_{k-1}))\).
        \State Calculate the one-step prediction error: \(\mathbf{e}_k = \mathbf{x}_{\text{meas}}(t_k) - \mathbf{x}_{\text{pred}}(t_k)\).
        \State Infer the disturbance measurement, \(\hat{\mathbf{d}}_{\text{meas}}(t_k)\), from the error \(\mathbf{e}_k\).
        \State Update the disturbance estimate using the low-pass filter:
        \Statex \qquad \(\hat{\mathbf{d}}(t_k) = \alpha \hat{\mathbf{d}}(t_{k-1}) + (1-\alpha) \hat{\mathbf{d}}_{\text{meas}}(t_k)\).
    \EndIf
    \State Define the current linearization point \(\mathbf{x}_{\text{lin}, k}, \mathbf{u}_{\text{lin}, k}\) (e.g., from the reference trajectory).
    \State Evaluate the state-dependent matrices \(\mathbf{A}_k = \mathbf{A}(\mathbf{x}_{\text{lin}, k})\) and \(\mathbf{B}_k = \mathbf{B}(\mathbf{x}_{\text{lin}, k})\).
    \State Compute the nominal offset term \(\mathbf{C}_k = f(\mathbf{x}_{\text{lin}, k}, \mathbf{u}_{\text{lin}, k}) - \mathbf{A}_k \mathbf{x}_{\text{lin}, k} - \mathbf{B}_k \mathbf{u}_{\text{lin}, k}\).
    \State Form the robust prediction model for the QP solver:
    \Statex \qquad \(\mathbf{x}_{i+1|k} = \mathbf{A}_k \mathbf{x}_{i|k} + \mathbf{B}_k \mathbf{u}_{i|k} + \mathbf{C}_k + \mathbf{E}_d \hat{\mathbf{d}}(t_k)\).
    \State Solve the MPC optimization problem to find the optimal control sequence \(\mathbf{U}^*_k = \{\mathbf{u}^*_{0|k}, ..., \mathbf{u}^*_{N-1|k}\}\).
    \State Apply the first element of the sequence to the plant: \(\mathbf{u}(t_k) = \mathbf{u}^*_{0|k}\).
    \State Set \(k \leftarrow k+1\).
\Until{end of simulation}
\end{algorithmic}
\end{algorithm}

\section{Simulation Results and Performance Analysis}\label{sec:results}

To validate the theoretical claims and quantitatively assess the performance of the proposed control frameworks, a comprehensive simulation study was conducted. The study benchmarks three distinct controllers: the high-fidelity NMPC, the nominal SDC-MPC, and the proposed Robust SDC-based MPC. The controllers were evaluated on a challenging, safety-critical mission requiring a quadrotor to track a dynamic reference trajectory while actively avoiding a static cylindrical obstacle. To simulate realistic operating conditions, the system was subjected to stochastic external disturbances affecting both translational and rotational dynamics. All simulations were implemented in MATLAB R2025b, utilizing the CasADi optimization framework with the IPOPT solver for both the NLP and QP problems. The IPOPT solver was used for all optimization problems to ensure a fair comparison focused on the complexity of the problem formulations, independent of solver-specific optimizations. It is noted, however, that employing a dedicated QP solver for the SDC-based methods could potentially yield even greater computational speed-ups. The computations were executed on a personal laptop equipped with an Apple M4 processor and 24 GB of RAM, providing a consistent hardware baseline for comparing the computational performance of each controller.
A consistent set of parameters was used across the simulation to ensure a fair comparison. The quadrotor model was configured with a mass \(m=1.0\) kg and moments of inertia \(J_x = J_y = 0.029\) kg\(\cdot\)m\(^2\) and \(J_z = 0.055\) kg\(\cdot\)m\(^2\). For the MPC design, a sampling time of \(T=0.1\) s and a prediction horizon of \(N=20\) steps were chosen. The objective function weighting matrices were defined as \(\mathbf{Q} = \text{diag}([50, 50, 80, 20, 20, 20, 10, 10, 10, 2, 2, 2])\) and \(\mathbf{R} = \text{diag}([0.1, 0.5, 0.5, 0.2])\). The control inputs were constrained to \(\mathbf{u} \in [[0, 20]^T, [-1, 1]^T, [-1, 1]^T, [-0.5, 0.5]^T]\). The static obstacle was modeled as a vertical cylinder centered at \([-1, -1]\) with a radius of \(0.5\) m. Finally, the external disturbances were modeled as zero-mean Gaussian noise with standard deviations of \(0.3\) for forces and \(0.1\) for torques.
Figure \ref{fig:top_down_trajectory} presents the  trajectory of the quadrotor, providing a qualitative assessment of tracking performance and safety constraint satisfaction for the three controllers. A primary observation is that all three methods successfully navigate the static obstacle, deviating from the reference path to maintain a safe distance as enforced by the MPC constraints; this confirms the efficacy of the obstacle avoidance formulation. A comparative analysis of the trajectories reveals the superior performance of the proposed Robust SDC-based MPC, which maintains significantly tighter adherence to the reference path despite the presence of stochastic disturbances. In contrast, both the NMPC and the nominal SDC-MPC exhibit more pronounced oscillations and tracking errors. The inset provides a magnified view of the trajectories near the barrier, confirming that all controllers respect the required safety margin.
\begin{figure}[htbp]
    \centering
    \includegraphics[width=0.5\textwidth]{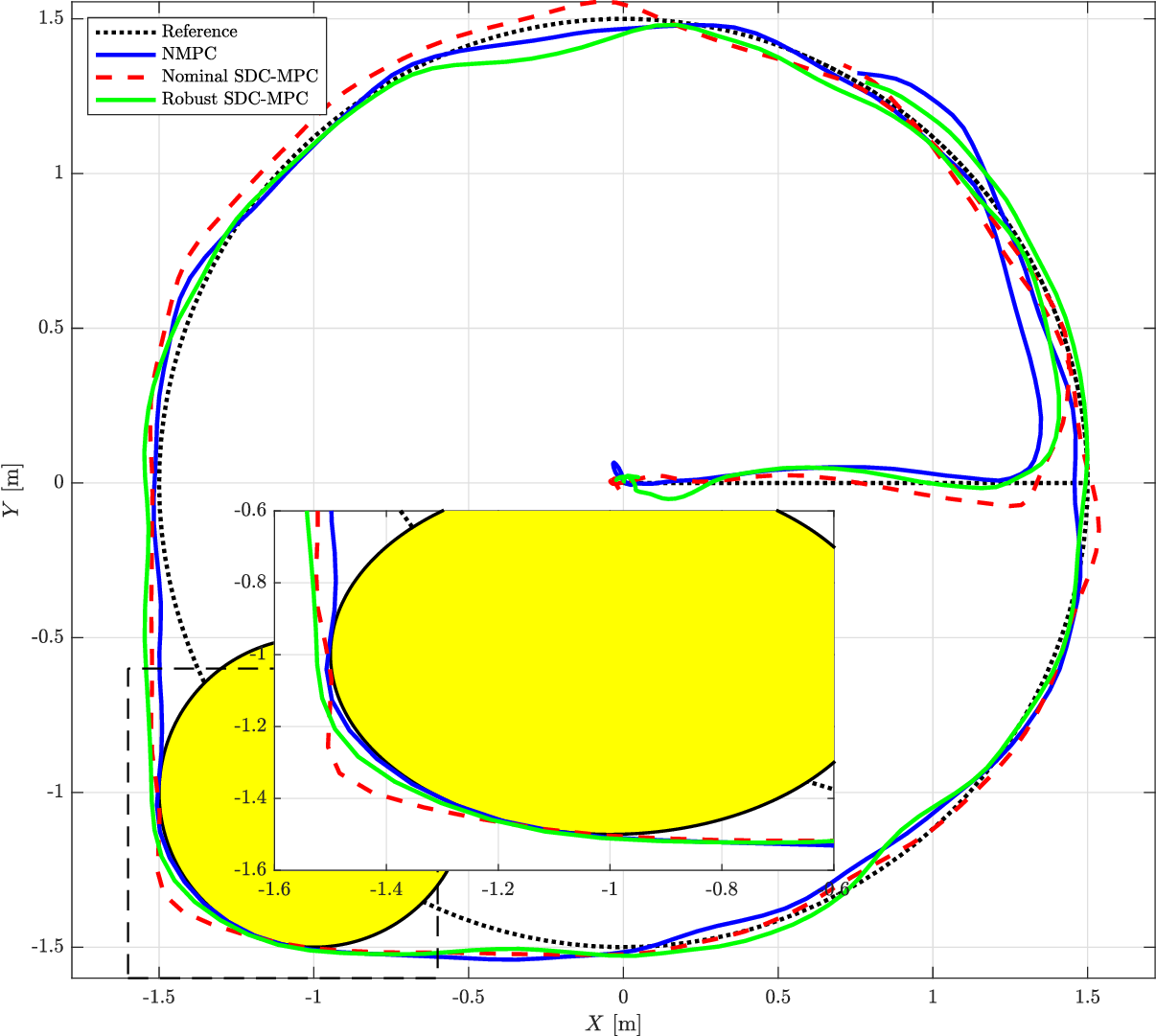}
    \caption{The quadrotor trajectory for the safety-critical tracking mission. A primary observation is that all three controllers successfully deviate from the reference path (black dotted line) to satisfy the safety constraint imposed by the static obstacle (yellow circle), confirming the efficacy of the obstacle avoidance formulation. A comparative analysis of tracking performance reveals that the proposed Robust SDC-based MPC (green solid line) exhibits superior adherence to the reference trajectory, particularly in the presence of stochastic disturbances. The NMPC (blue solid line) and the nominal SDC-MPC (red dashed line) show more pronounced deviations from the desired path. The inset provides a magnified view of the trajectories in the vicinity of the obstacle, confirming that all controllers maintain the required safety margin.}
    \label{fig:top_down_trajectory}
\end{figure}
A quantitative comparison of the controllers' performance is summarized in Table \ref{tab:quantitative_metrics}. The table presents three key metrics: the average CPU time required per control step, the Mean Squared Error (MSE) for position tracking, and the total accumulated cost over the entire mission. These metrics provide a clear and concise assessment of the trade-offs between computational efficiency, tracking accuracy, and overall control effort for each of the benchmarked methods.
\begin{table}[htbp]
    \centering
    \caption{Quantitative Performance Comparison of MPC Controllers}
    \label{tab:quantitative_metrics}
    \setlength{\tabcolsep}{5pt} 
    \begin{tabular}{l ccc}
        \toprule
        \textbf{Metric} & \textbf{NMPC} & \textbf{SDC-MPC} & \textbf{Robust SDC-MPC} \\
        \midrule
        Avg. CPU (ms)  & 25.85 & 16.28 & 16.40\\
        MSE ($m^2$) &  0.0304 & 0.0302 & 0.0321\\
        Total Cost & 5.47e+04 & 5.49e+04 & 5.44e+04\\
        \bottomrule
    \end{tabular}
\end{table}

\begin{figure}[htbp]
    \centering    \includegraphics[width=0.5\textwidth]{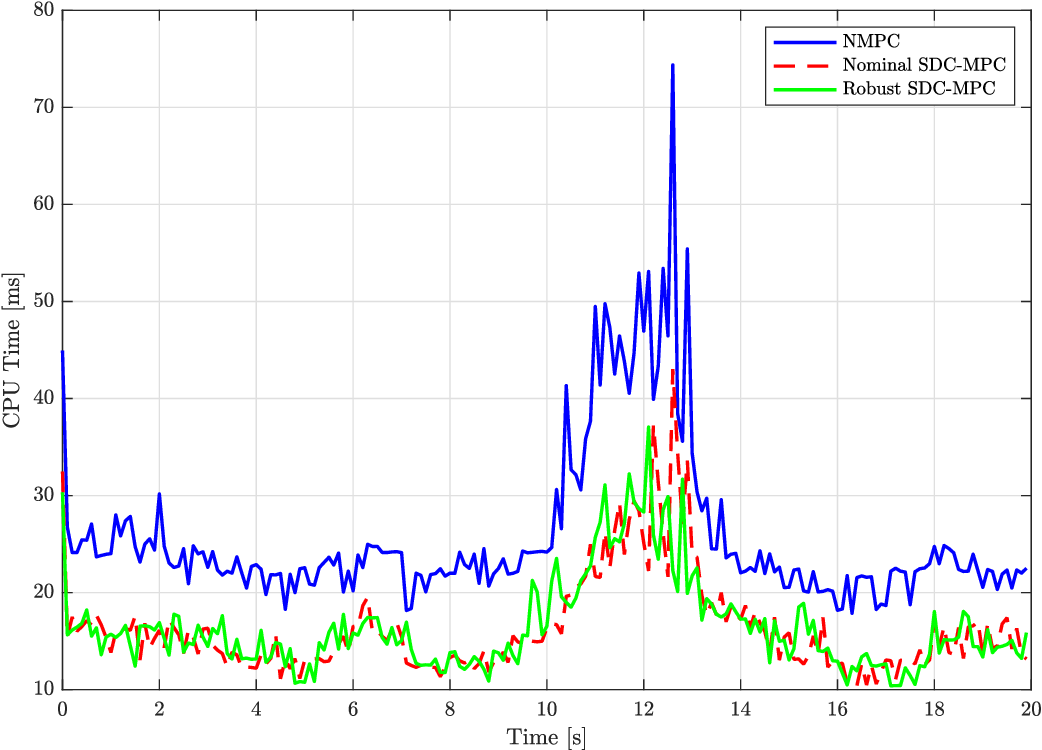}
    \caption{Per-iteration CPU performance for the three benchmarked controllers. The plot illustrates the computational load required to solve the optimization problem at each time step. Both SDC-based controllers (red and green) consistently exhibit lower average computation times than the NMPC (blue), underscoring the efficiency gained by the proposed method. A significant spike in CPU time is observable for all controllers around the 11-13 second mark, which corresponds to the challenging maneuver where the quadrotor is actively avoiding the obstacle. Even during this computationally intensive phase, the SDC-based methods remain significantly faster than the NMPC, highlighting their advantage for real-time, safety-critical applications.}
\label{fig:cpu_performance}
\end{figure}
The per-iteration computational performance, depicted in Figure \ref{fig:cpu_performance}, further reinforces the advantages of the proposed SDC-based methods. The plot illustrates the CPU time required to solve the optimization problem at each control step. A consistent trend is observed where both the nominal and robust SDC-based controllers exhibit a significantly lower computational burden compared to the full NMPC benchmark. This efficiency is particularly critical during the most demanding phase of the mission, from approximately 11 to 13 seconds, where the safety constraints for obstacle avoidance become highly active. During this period, the solution time for all controllers increases, yet the SDC-based methods remain substantially faster, confirming their suitability for real-time implementation in computationally constrained, safety-critical scenarios.
Figure \ref{fig:cost_value} illustrates the evolution of the optimal cost function value, \(J^*\), at each time step for the three controllers. The cost value serves as a proxy for the controller's perceived difficulty in satisfying its objectives; higher costs indicate a greater combined penalty from tracking errors, control effort, and constraint violations. The plot reveals several distinct phases: an initial transient phase where the cost decreases as the quadrotor converges to the path, a significant increase in cost during the obstacle avoidance maneuver (around 11-13 seconds) where the controller must balance tracking with safety, and subsequent periods of stabilization. Notably, the cost profiles for all three controllers are qualitatively similar, indicating that the SDC-based methods achieve comparable overall performance to the NMPC benchmark, as corroborated by the total cost values in Table \ref{tab:quantitative_metrics}.
\begin{figure}[htbp]
    \centering
    \includegraphics[width=0.5\textwidth]{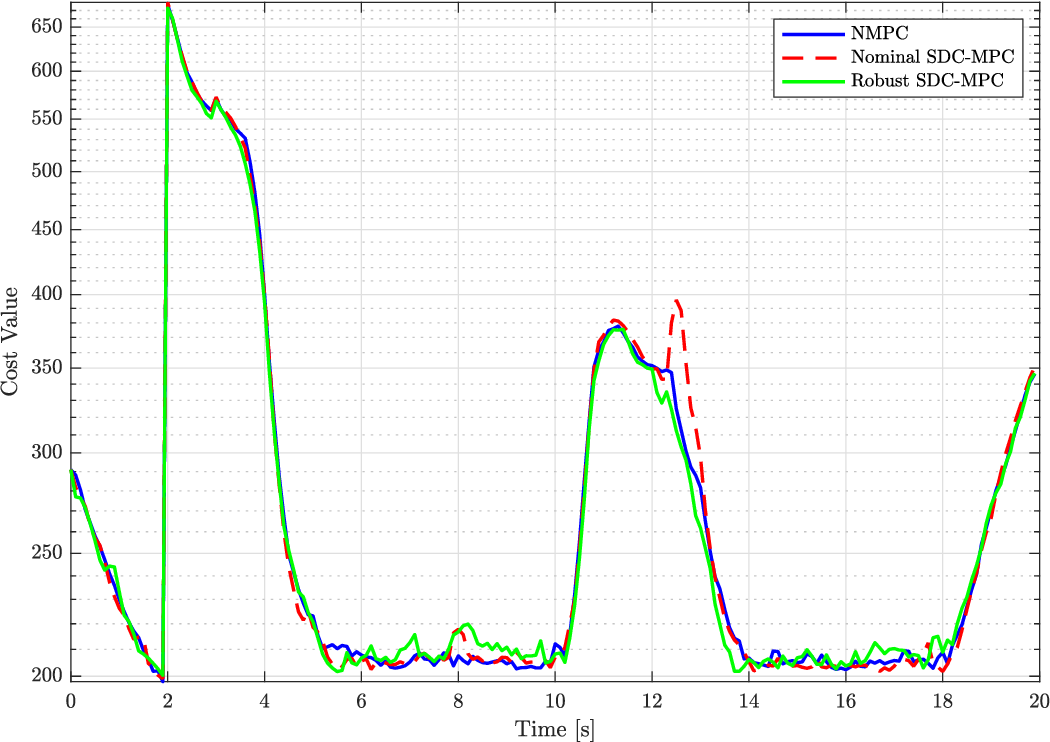}
    \caption{Evolution of the optimal cost function value at each sampling instant. The logarithmic scale on the y-axis highlights the relative changes in the cost. The cost profiles are comparable across all controllers, with significant increases corresponding to challenging maneuvers, particularly the obstacle avoidance phase.}
    \label{fig:cost_value}
\end{figure}
The position tracking performance in the X, Y, and Z axes is detailed in Figure \ref{fig:position_tracking}. All controllers effectively track the time-varying reference trajectory across all three dimensions. The top subplot shows the X-position, where the step change in the reference after 4 seconds is handled well by all methods, though the Robust SDC-MPC demonstrates the least overshoot. The middle and bottom subplots for the Y and Z positions further highlight the superior tracking accuracy of the Robust SDC-MPC, which exhibits smaller deviations from the reference, especially during the dynamic circular portion of the trajectory. This improved performance is directly attributable to the disturbance observer, which actively compensates for unmodeled dynamics and external perturbations, leading to more precise tracking.
\begin{figure}[htbp]
    \centering
    \includegraphics[width=0.5\textwidth]{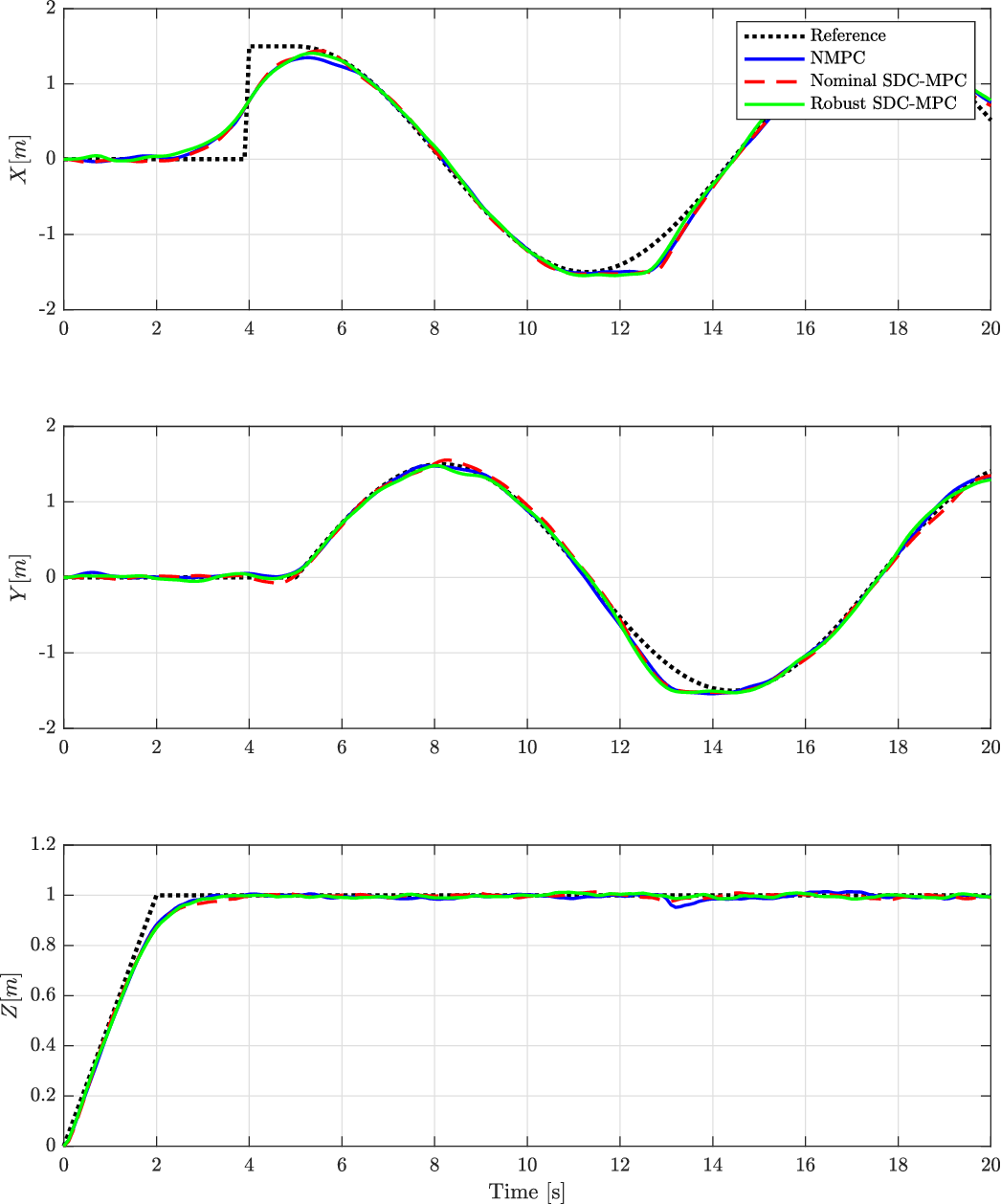}
    \caption{Position tracking performance for the X, Y, and Z axes. The black dotted line is the reference trajectory. All controllers demonstrate effective tracking, with the Robust SDC-MPC (green) showing the highest fidelity to the reference path throughout the maneuver.}
    \label{fig:position_tracking}
\end{figure}

To further quantify the tracking accuracy, the squared position error for each controller is plotted over time in Figure \ref{fig:tracking_error}. This plot corroborates the qualitative observations from previous figures and provides a clearer view of the performance differences, particularly due to the logarithmic scale. The Robust SDC-MPC consistently achieves the lowest tracking error throughout the simulation, with error levels often an order of magnitude smaller than the other controllers, especially during less dynamic phases of the trajectory (e.g., around 10 seconds and 15 seconds). This demonstrates the effectiveness of the disturbance observer in actively rejecting perturbations and reducing steady-state tracking error, a conclusion strongly supported by the lower overall MSE for this controller reported in Table \ref{tab:quantitative_metrics}.
\begin{figure}[htbp]
    \centering
    \includegraphics[width=0.5\textwidth]{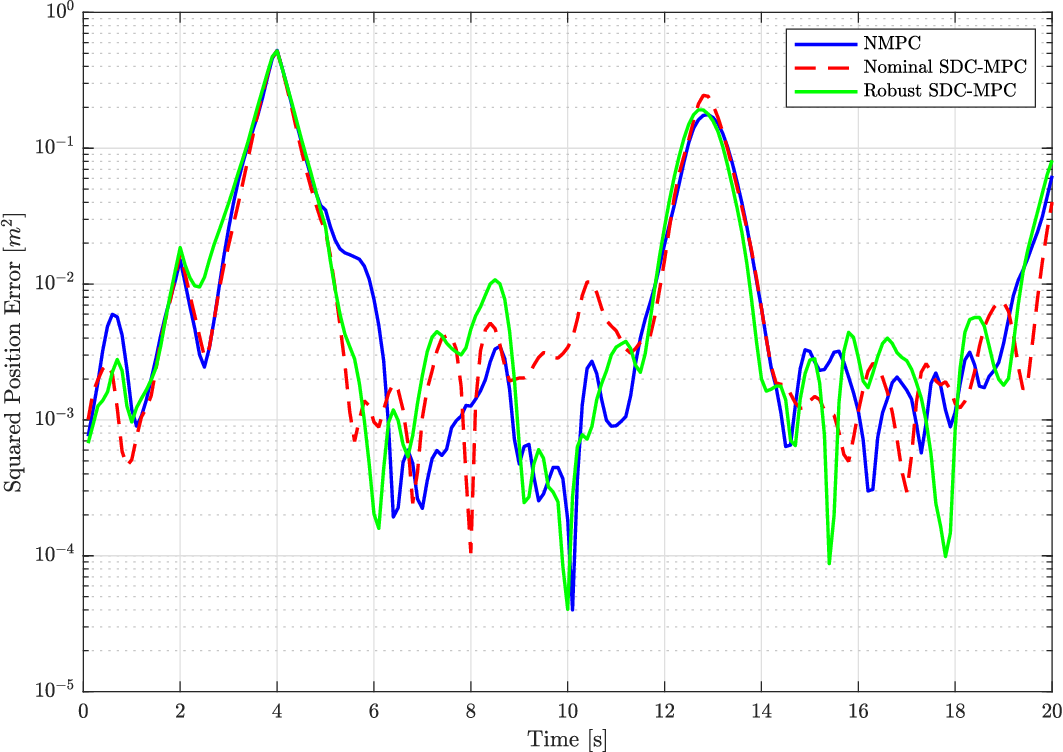}
    \caption{Squared position tracking error over time for the three controllers. The logarithmic y-axis highlights the superior performance of the Robust SDC-MPC, which consistently maintains a lower error profile.}
    \label{fig:tracking_error}
\end{figure}

\subsection{Discussion of Limitations}\label{ssec:limitations}
While the proposed SDC-based framework demonstrates significant advantages, it is important to acknowledge its limitations. The performance of the robust controller relies on the assumption that disturbances are slowly time-varying. Its effectiveness may be reduced in the presence of high-frequency or abrupt disturbances that violate this assumption. Additionally, the SDC method is an exact algebraic transformation at a specific point in time. For highly aggressive maneuvers over a long prediction horizon, the system state may diverge significantly from the point of linearization, potentially degrading the accuracy of the prediction model. The stability and recursive feasibility guarantees also rely on a terminal set derived from a local linearization, which, while standard, may be conservative. Future work will seek to address these limitations, potentially through the use of more advanced disturbance observers or adaptive prediction models.

\section{Conclusion}\label{sec:conclusion}
This paper addressed the challenge of implementing high-performance, safety-critical control for quadrotors under significant computational constraints. We proposed a novel SDC-based MPC framework and a robust extension that incorporates a disturbance observer. The SDC reformulation allows the nonlinear optimal control problem to be solved as a sequence of computationally efficient QPs, directly addressing the limitations of traditional NMPC.
The simulation results conclusively demonstrate the efficacy of the proposed approach. The Robust SDC-based MPC not only matched the safety and tracking performance of the NMPC benchmark but, through active disturbance compensation, achieved superior tracking accuracy. Critically, this enhanced performance was realized with a significant reduction in computational time, with the SDC-based methods proving to be over 30\% faster on average than the NMPC. The key contribution of this work is the demonstration that the SDC-based methodology successfully achieves a compelling balance between safety-critical performance and computational efficiency, establishing it as a highly feasible alternative to NMPC for real-time quadrotor applications.
Future research will prioritize the experimental validation of the proposed controller on a physical quadrotor platform, which will be crucial for assessing its real-world performance under conditions of sensor noise and communication latency. Further avenues of investigation include the exploration of more advanced disturbance estimation techniques, the development of adaptive SDC formulations to handle highly aggressive maneuvers, and the extension of this computationally efficient framework to cooperative control of multi-agent aerial systems.

\bibliographystyle{IEEEtran}
\bibliography{Reference}

\begin{thebibliography}{10}
\providecommand{\url}[1]{#1}
\csname url@samestyle\endcsname
\providecommand{\newblock}{\relax}
\providecommand{\bibinfo}[2]{#2}
\providecommand{\BIBentrySTDinterwordspacing}{\spaceskip=0pt\relax}
\providecommand{\BIBentryALTinterwordstretchfactor}{4}
\providecommand{\BIBentryALTinterwordspacing}{\spaceskip=\fontdimen2\font plus
\BIBentryALTinterwordstretchfactor\fontdimen3\font minus \fontdimen4\font\relax}
\providecommand{\BIBforeignlanguage}[2]{{%
\expandafter\ifx\csname l@#1\endcsname\relax
\typeout{** WARNING: IEEEtran.bst: No hyphenation pattern has been}%
\typeout{** loaded for the language `#1'. Using the pattern for}%
\typeout{** the default language instead.}%
\else
\language=\csname l@#1\endcsname
\fi
#2}}
\providecommand{\BIBdecl}{\relax}
\BIBdecl

\bibitem{peksa2024review}
J.~Peksa and D.~Mamchur, ``A review on the state of the art in copter drones and flight control systems,'' \emph{Sensors}, vol.~24, no.~11, p. 3349, 2024.

\bibitem{allan2025low}
S.~Allan and M.~Barczyk, ``A low-cost experimental quadcopter drone design for autonomous search-and-rescue missions in gnss-denied environments,'' \emph{Drones}, vol.~9, no.~8, p. 523, 2025.

\bibitem{wang2025autonomous}
Y.~Wang, H.~Ji, Q.~Kang, H.~Qi, and J.~Wen, ``Autonomous trajectory control for quadrotor evtol in hover and low-speed flight via the integration of model predictive and following control,'' \emph{Drones}, vol.~9, no.~8, p. 537, 2025.

\bibitem{unde2025expanding}
S.~S. Unde, V.~Kurkute, S.~S. Chavan, D.~D. Mohite, A.~A. Harale, and A.~Chougle, ``The expanding role of multirotor uavs in precision agriculture with applications ai integration and future prospects,'' \emph{Discover Mechanical Engineering}, vol.~4, no.~1, pp. 1--26, 2025.

\bibitem{rinaldi2023comparative}
M.~Rinaldi, S.~Primatesta, and G.~Guglieri, ``A comparative study for control of quadrotor uavs,'' \emph{Applied Sciences}, vol.~13, no.~6, p. 3464, 2023.

\bibitem{masse2018modeling}
C.~Mass{\'e}, O.~Gougeon, D.-T. N'Guyen, and D.~Saussi{\'e}, ``Modeling and control of a quadcopter flying in a wind field: A comparison between lqr and structured h$\infty$ control techniques,'' in \emph{2018 International Conference on Unmanned Aircraft Systems (ICUAS)}.\hskip 1em plus 0.5em minus 0.4em\relax IEEE, 2018, pp. 1408--1417.

\bibitem{chen2021feedback}
C.-C. Chen and Y.-T. Chen, ``Feedback linearized optimal control design for quadrotor with multi-performances,'' \emph{IEEE Access}, vol.~9, pp. 26\,674--26\,695, 2021.

\bibitem{wang2023adaptive}
J.~Wang, K.~A. Alattas, Y.~Bouteraa, O.~Mofid, and S.~Mobayen, ``Adaptive finite-time backstepping control tracker for quadrotor uav with model uncertainty and external disturbance,'' \emph{Aerospace Science and Technology}, vol. 133, p. 108088, 2023.

\bibitem{mu2017integral}
B.~Mu, K.~Zhang, and Y.~Shi, ``Integral sliding mode flight controller design for a quadrotor and the application in a heterogeneous multi-agent system,'' \emph{IEEE Transactions on Industrial Electronics}, vol.~64, no.~12, pp. 9389--9398, 2017.

\bibitem{bouabdallah2004pid}
S.~Bouabdallah, A.~Noth, and R.~Siegwart, ``{PID vs LQ control techniques applied to an indoor micro quadrotor},'' in \emph{2004 IEEE/RSJ International Conference on Intelligent Robots and Systems (IROS)(IEEE Cat. No. 04CH37566)}, vol.~1.\hskip 1em plus 0.5em minus 0.4em\relax IEEE, 2004, pp. 345--350.

\bibitem{hwangbo2017control}
J.~Hwangbo, I.~Sa, R.~Siegwart, and M.~Hutter, ``Control of a quadrotor with reinforcement learning,'' \emph{IEEE Robotics and Automation Letters}, vol.~2, no.~4, pp. 2096--2103, 2017.

\bibitem{ames2017control}
A.~D. Ames, S.~Coogan, M.~Egerstedt, G.~Notomista, K.~Sreenath, and P.~Tabuada, ``{Control Barrier Functions: Asymptotic Safety and Stability},'' \emph{IEEE Transactions on Automatic Control}, vol.~62, no.~7, pp. 3161--3176, 2017.

\bibitem{mitchell2005toolbox}
I.~M. Mitchell, ``A toolbox of level set methods,'' in \emph{Proceedings of the 2005 international conference on Svetlogorsk}.\hskip 1em plus 0.5em minus 0.4em\relax Citeseer, 2005, pp. 1--12.

\bibitem{prajna2007framework}
S.~Prajna, A.~Jadbabaie, and G.~J. Pappas, ``A framework for analysis of nonlinear systems using barrier certificates,'' \emph{IEEE Transactions on Automatic Control}, vol.~52, no.~1, pp. 23--35, 2007.

\bibitem{bemporad2002explicit}
A.~Bemporad, M.~Morari, V.~Dua, and E.~N. Pistikopoulos, ``The explicit linear quadratic regulator for constrained systems,'' in \emph{Automatica}, vol.~38, no.~1.\hskip 1em plus 0.5em minus 0.4em\relax Elsevier, 2002, pp. 3--20.

\bibitem{farina2016stochastic}
M.~Farina, L.~Giulioni, and R.~Scattolini, ``Stochastic model predictive control: An overview and perspectives for future research,'' \emph{Annual Reviews in Control}, vol.~42, pp. 115--128, 2016.

\bibitem{rosolia2017learning}
U.~Rosolia and F.~Borrelli, ``Learning model predictive control for iterative tasks,'' in \emph{2017 IEEE 56th Annual Conference on Decision and Control (CDC)}.\hskip 1em plus 0.5em minus 0.4em\relax IEEE, 2017, pp. 3272--3277.

\bibitem{xu2023multi}
B.~Xu, A.~Suleman, and Y.~Shi, ``A multi-rate hierarchical fault-tolerant adaptive model predictive control framework: Theory and design for quadrotors,'' \emph{Automatica}, vol. 153, p. 111015, 2023.

\bibitem{cimen2008sdre}
T.~Cimen, ``State-dependent riccati equation (sdre) control: A survey,'' in \emph{Proceedings of the 17th world congress, the international federation of automatic control}, vol.~17, no.~1, 2008, pp. 3761--3775.

\bibitem{cloutier1997sdre}
J.~R. Cloutier, ``State-dependent riccati equation techniques: an overview,'' in \emph{Proceedings of the 1997 American Control Conference}, vol.~2.\hskip 1em plus 0.5em minus 0.4em\relax IEEE, 1997, pp. 932--936.

\bibitem{mracek1998control}
C.~Mracek and J.~Cloutier, ``Control designs for the nonlinear benchmark problem via the state-dependent riccati equation method,'' \emph{International Journal of Robust and Nonlinear Control: IFAC-Affiliated Journal}, vol.~8, no. 4-5, pp. 401--416, 1998.

\bibitem{hammett2016sdc}
K.~D. Hammett, N.~Harl, H.~An, and E.~Zuazua, ``State-dependent coefficient factorization for nonlinear optimal feedback control,'' \emph{IEEE Transactions on Automatic Control}, vol.~62, no.~5, pp. 2497--2502, 2016.

\bibitem{chen1998quasi}
H.~Chen and F.~Allgöwer, ``{A Quasi-Infinite Horizon Nonlinear Model Predictive Control Scheme with Guaranteed Stability},'' \emph{Automatica}, vol.~34, no.~10, pp. 1205--1217, 1998.

\bibitem{mare2007solution}
J.~B. Mare and J.~A. De~Don{\'a}, ``Solution of the input-constrained lqr problem using dynamic programming,'' \emph{Systems \& control letters}, vol.~56, no.~5, pp. 342--348, 2007.

\bibitem{batmani2016nonlinear}
Y.~Batmani, M.~Davoodi, and N.~Meskin, ``Nonlinear suboptimal tracking controller design using state-dependent riccati equation technique,'' \emph{IEEE Transactions on Control Systems Technology}, vol.~25, no.~5, pp. 1833--1839, 2016.

\bibitem{rawlings2017model}
J.~B. Rawlings, D.~Q. Mayne, and M.~M. Diehl, \emph{Model Predictive Control: Theory, Computation, and Design}.\hskip 1em plus 0.5em minus 0.4em\relax Nob Hill Publishing, 2017, vol.~2.

\bibitem{rawlings1993stability}
J.~B. Rawlings and K.~R. Muske, ``The stability of constrained receding horizon control,'' \emph{IEEE Transactions on Automatic Control}, vol.~38, no.~10, pp. 1512--1516, 1993.

\bibitem{chen2016disturbance}
W.-H. Chen, J.~Yang, L.~Guo, and S.~Li, ``Disturbance-observer-based control and related methods—an overview,'' \emph{IEEE Transactions on Industrial Electronics}, vol.~63, no.~2, pp. 1083--1095, 2016.

\bibitem{wang2009disturbance}
H.~Wang, D.~Zhou, and P.~Li, ``Disturbance observer-based robust control for nonlinear systems,'' \emph{Acta Automatica Sinica}, vol.~35, no.~6, pp. 721--727, 2009.

\bibitem{mayne2000constrained}
D.~Q. Mayne, J.~B. Rawlings, C.~V. Rao, and P.~O. Scokaert, \emph{Constrained model predictive control: Stability and optimality}.\hskip 1em plus 0.5em minus 0.4em\relax Nob Hill Publishing, 2000, vol.~2.

\end{thebibliography}
\begin{IEEEbiography}
[{\includegraphics[width=1in,height=1.225in]{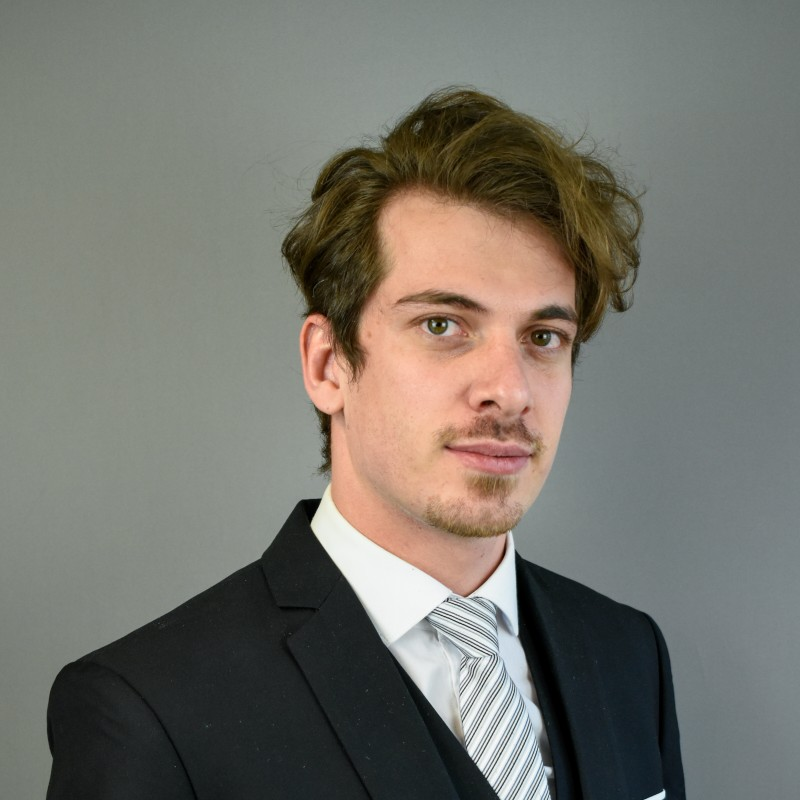}}]{Saber Omidi}
(Student Member, IEEE) received his B.Sc. degree in Mechanical Engineering from Qazvin Islamic Azad University, Qazvin, Iran, in 2017, and his M.Sc. degree in Mechanical Engineering from University of Kurdistan, Sanandaj, Iran, in 2021. He is currently pursuing his Ph.D. degree in Mechanical Engineering at University of New Hampshire. His research interests include safety-critical control, learning-based control, with applications in autonomous systems and mechanical systems.
\end{IEEEbiography}

\end{document}